\title{Scalable Floating-Point Satisfiability via Staged Optimization}
\author{Yuanzhuo Zhang}
\affiliation{%
  \institution{Virginia Tech}
  \country{USA}
}
\author{Zhoulai Fu}
\affiliation{%
  \institution{State University of New York}
  \country{ Korea}
}
\affiliation{%
  \institution{Virginia Tech}
  \country{USA}
}
\affiliation{%
  \institution{Stony Brook University}
  \country{USA}
}
\author{Binoy Ravindran}
\affiliation{%
  \institution{Virginia Tech}
  \country{USA}
}
\setlist[itemize]{leftmargin=*, label=--}
\setlist[enumerate]{label=\arabic*., leftmargin=*}
\newcommand{\myheading}[1] {\par\vspace{0.1em}\noindent\textbf{\textit{#1}}~}
\begin{document}

\begin{abstract}
   
This work introduces \emph{StageSAT}, a new approach to solving floating-point satisfiability that bridges SMT solving with numerical optimization. StageSAT reframes a floating-point formula as a series of optimization problems in three stages, each with increasing precision. It begins with a fast, projection-aided descent objective to efficiently guide the search toward a feasible region, then proceeds to bit-level accuracy with ULP$^2$ optimization and a final $n$-ULP lattice refinement to ensure correctness.
 By construction, the final stage uses a representing function that evaluates to zero if and only if a candidate satisfies all constraints. Thus, whenever optimization drives the final-stage objective to zero, the resulting assignment is a valid solution, providing a built-in guarantee of soundness (no spurious SAT results).
 To further improve the search, StageSAT introduces a partial monotone descent property on linear constraints via an orthogonal projection technique, which prevents the optimizer from stalling on flat or misleading objective landscapes. Critically, this solver requires no heavy bit-level reasoning or specialized abstractions of floating-point arithmetic; it treats complex arithmetic as a black-box, using runtime evaluations to navigate the input space.

We implement StageSAT and evaluate it on extensive benchmarks, including the SMT-COMP'25 floating-point suites and difficult cases from prior work. In our experiments, StageSAT proved both more scalable and more accurate than state-of-the-art optimization-based alternatives. It solved strictly more formulas than any competing solver under the same time budget -- in fact, StageSAT found most of the satisfiable instance in our benchmarks and never produced a spurious model for an unsatisfiable formula. This amounts to 99.4\% recall on satisfiable cases with 0\% false SAT in our benchmarks, exceeding the reliability of prior optimization-based solvers we tested. StageSAT also delivered significant speedups (often 5--10$\times$ faster) over traditional bit-precise SMT solvers and earlier numeric solvers. These results demonstrate that our staged optimization strategy can significantly improve both the performance and correctness of floating-point satisfiability solving. To facilitate reproduction, we provide an anonymized artifact (implementation, benchmarks, and evaluation results) as supplementary material in the submission system.

\end{abstract}

\maketitle

\section{Introduction}

\textbf{Optimization-Based Floating-Point Solving:} Traditional SMT solvers for floating-point (FP) arithmetic often struggle with scalability, especially on formulas with non-linear or transcendental operations. An alternative approach reduces a floating-point satisfiability problem to a numerical optimization problem. In this paradigm, a floating-point formula (a set of constraints) is transformed into a numeric objective function R(x) that acts as a \emph{distance to satisfaction}. This objective is non-negative and evaluates to zero if and only if the candidate assignment x is a true solution (i.e., satisfies all constraints). Deciding satisfiability then becomes an optimization task: minimize R(x) and check whether the minimum value reaches zero. Pioneering solvers like {XSat\cite{DBLP:conf/cav/FuS16}}, {goSAT\cite{DBLP:conf/fmcad/KhadraSK17}}, and {Grater\cite{grater}} demonstrated the promise of this strategy by encoding FP constraints into such objective functions and then applying global optimization techniques to find a zero. The key appeal of these \emph{optimization-based} solvers is that they can treat complex arithmetic as black-box functions—they do not need to bit-blast or explicitly enumerate the intricacies of IEEE-754 semantics. Instead, the solver is guided by numeric feedback from R(x): as the assignment x moves closer to satisfying the formula, R(x) decreases, and a solution is found when R(x) hits 0. This enables reasoning about arbitrary mathematical functions (e.g., transcendentals like sin or cos) as long as those functions can be evaluated, a notable advantage over conventional SMT methods.

\noindent\textbf{Challenges:} Early optimization-based FP solvers revealed two key challenges. First, the objective functions resulting from this reduction can be irregular and non-smooth, which hurts runtime performance and can mislead numeric search methods (gradient-based or stochastic optimizers) into local minima or flat plateaus. In practice, the solver might fail to find a solution even when one exists, simply because the search gets “stuck” in a region where the objective doesn’t meaningfully decrease. For example, {XSat} has been observed to mislabel satisfiable instances as UNSAT due to inadequate minimization (getting trapped away from the global minimum); {Grater} avoids false-UNSAT errors but often returns many \emph{timeout} results, some of which are later confirmed satisfiable.

Second, 
 finite precision of floating-point arithmetic and coarse step adjustments can undermine solver correctness and force a trade-off with solver efficiency. Because computing R(x) uses finite-precision arithmetic and introduces rounding errors, optimization-based  solvers like Grater employ small tolerance thresholds to decide when a formula is satisfied\cite{grater-experiment}, declaring success if R(x) < tol, even though mathematically only R(x) = 0 corresponds to a true model. However, an overly loose tolerance can yield spurious solutions, e.g. erroneously accepting x $\approx$ 0 as satisfying a constraint like x > 0 $\land$ x $\leq$ 1e-160. Conversely, XSat eschews tolerances entirely and only accepts a candidate when R(x) = 0 exactly, but it can still miss a valid solution if its step-size policy cannot nudge a candidate from x = 0 to a tiny non-zero value, e.g., x = 1e-200.

These challenges mean that, in practice, earlier optimization-based solutions often proved \emph{unscalable} and \emph{inaccurate}. They might label SAT formulas as UNSAT (or unknown) because the global minimum of a jagged objective couldn’t be found, and conversely label UNSAT formulas as SAT due to floating-point error and insufficient bit precision. Notably, XSat partially addressed the precision issue by measuring constraint violations in terms of \emph{units in the last place (ULP\cite{ulp1991})}—the smallest difference between two representable FP numbers—thus bringing bit-level rigor to its objective. However, a purely ULP-based objective landscape is highly discontinuous and difficult to search, contributing to the scalability problems.

\noindent\textbf{Our Approach: StageSAT.} We present {StageSAT}, a new floating-point SMT solver that addresses these accuracy and scalability issues via \emph{staged optimization}. The central idea is to deliberately engineer the objective function and search process in multiple phases, combining the smoothness needed for effective global search with the precision needed for correctness. StageSAT aims to be \emph{both} robust on large benchmarks \emph{and} precise in its results—reporting SAT only when a correct model is found and reporting UNSAT-GUESS (i.e., a heuristic UNSAT result without proof) only after exhaustive search attempts in the final stage suggest no model is likely to exist.

StageSAT’s design integrates several key techniques to achieve this balance:
\begin{itemize}
  \item \textbf{Multi-staged objective design:} Instead of a single monolithic objective, StageSAT incrementally refines its objective across three stages. It begins with smooth squared-error terms to quickly guide the search toward a promising region. It then transitions to a more precise 1-ULP difference measure (where a ULP, or \emph{unit in last place}, represents the smallest step in FP value) to hone in on bit-level correctness. Finally, it applies an \emph{n-ULP margin refinement\cite{weak-distance}} in the last stage, progressively tightening the allowed error margin down to zero. This staged approach preserves searchability in early phases while ultimately achieving bit-precise satisfaction—StageSAT only reports “SAT” when the final objective truly reaches zero under IEEE-754 semantics.
  \item \textbf{Orthogonal projection:} To mitigate misleading gradients in regions with linear equalities or flat surfaces, StageSAT employs an orthogonal projection technique\cite{DBLP:books/daglib/0086372}. Essentially, when optimizing constraints like \texttt{a·x + b = c} that form flat “valleys” in the objective landscape, StageSAT projects the search step orthogonally onto the constraint surface to better estimate the true distance to satisfaction. This improves the alignment between movements in the input space and decreases in the objective, acting as a practical safeguard against stalls and false convergence on plateaus.
  \item \textbf{Objective smoothing:} StageSAT applies lightweight smoothing to the objective in each stage (for example, using squared forms of error terms or tiny perturbations) to reduce erratic behavior. By “rounding off” sharp corners and eliminating zero-gradient traps early on, this smoothing makes the objective landscape easier to navigate, leading the numerical optimizer more reliably toward a global minimum.
\end{itemize}

In theory, StageSAT is \emph{sound} for satisfiability: its final objective stage is bit-precise, so it will report SAT only when a candidate solution passes a rigorous, bit-level check (R(x) = 0 exactly, meaning the assignment is a valid model). For cases where the optimizer converges to a non-zero minimum, StageSAT concludes the formula is unsatisfiable (albeit without a formal proof). While this UNSAT outcome is a heuristic “confident guess” (since StageSAT, like other numeric solvers, is incomplete), it was empirically accurate in all our evaluations. If StageSAT fails to find a solution within a given time limit, it returns \emph{unknown/timeout}.

\noindent\textbf{Results:} In practice, {StageSAT} proves both accurate and scalable. We evaluated StageSAT on the SMT-COMP’25\cite{SMTCOMP2025} Quantifier-Free Floating-Point {QF\_FP} benchmarks\cite{SMTLIB2025}, including both the “middle” and “large” benchmark suites, as well as the sets of benchmarks used by XSat, Grater and JFS\cite{jfs}. StageSAT was compared against four state-of-the-art bit-precise SMT solvers ({Z3\cite{DBLP:conf/tacas/MouraB08}}, {MathSAT\cite{mathsat5}}, {Bitwuzla\cite{DBLP:conf/cav/NiemetzP23}}, {CVC5\cite{DBLP:conf/tacas/BarbosaBBKLMMMN22}}) and four optimization-based solvers ({XSat}, {goSAT}, {Grater}, {JFS})\cite{DBLP:conf/cav/FuS16, DBLP:conf/fmcad/KhadraSK17, grater, jfs}. The results demonstrate significant advantages for StageSAT:
\begin{itemize}
  \item On the {"middle"} benchmarks (35 formulas), StageSAT solved \emph{all} 35 instances (SAT and UNSAT) with a median runtime of just 0.18 seconds.
  \item On the more challenging {"large"} benchmarks, StageSAT solved 24 of 26 SAT instances and returned unsat-guess on 21 UNSAT instances (matching known ground truth results). In total, StageSAT achieved the highest coverage, solving strictly more formulas than any competing solver due to its low timeout rate.
  \item Compared to the best SMT solvers (e.g., Bitwuzla and CVC5), StageSAT achieved up to \emph{10× faster} median runtime on satisfiable cases and avoided many timeouts, solving several formulas that those solvers could not.
  \item Compared to prior optimization-based solvers, StageSAT showed significantly improved reliability and performance. It attained \emph{99.4\% SAT recall} (missing only two satisfiable instances out of 347) and \emph{0\% false SAT} on unsatisfiable cases (never reporting a spurious model), whereas tools like goSAT and Grater often reported \emph{unknown} or suffered misclassifications. StageSAT was also about an order of magnitude faster on average than {XSat}, {goSAT}, and {Grater} in our experiments.
\end{itemize}

To understand the impact of each design decision, we performed an ablation study on StageSAT’s components (objective staging, orthogonal projection, and smoothing). Removing any one component significantly degraded performance or accuracy, confirming that each plays a critical role in the solver’s effectiveness. Although StageSAT remains incomplete (like its predecessors, it cannot produce formal UNSAT proofs), our results illustrate that numerical optimization can be a powerful complement to traditional SMT solving—especially for difficult floating-point formulas where scale and runtime are paramount.

\noindent\textbf{Contributions.} In summary, this paper makes the following contributions:
\begin{itemize}
  \item \textbf{ULP-staged objective design:} A multi-stage objective strategy that moves from a smooth squared-distance metric to a 1-ULP-accurate metric and finally an n-ULP refinement. This design achieves robust global search in early phases \emph{and} bit-precise SAT detection in the final phase, addressing both scalability and floating-point precision issues in SMT solving.
  \item \textbf{Orthogonal projection for improved descent:} A technique that applies orthogonal projection on linear constraints during optimization. This improves monotonic descent towards satisfying assignments and helps the solver avoid stagnation on misleading objective landscapes.
  \item \textbf{Smoothing for stability:} The use of lightweight objective smoothing (e.g., squaring error terms) to improve convergence in regions with discontinuities or zero-gradient traps. Smoothing yields a more stable optimization process, increasing reliability and speed.
  \item \textbf{Extensive evaluation:} A thorough experimental evaluation on SMT-COMP’25 QF\_FP benchmarks demonstrating that StageSAT outperforms prior incomplete solvers in both the number of benchmarks solved and overall runtime. StageSAT also complements the complete solvers well by producing consistent results in much less time. We also present an ablation study confirming the necessity of each proposed component.
\end{itemize}


\section{Overview and An Illustrative Example}
\label{sec:overview}

\paragraph{Scope and goal.}
We target the \textsc{QF\_FP} (Quantifier‑Free Floating‑Point) theory. Given a CNF formula \(\mathcal{C}\) over IEEE‑754 atoms, our aim is a practically accurate satisfiability outcome:
(1) if \(\mathcal{C}\) is satisfiable, we return \textbf{sat} with a witnessing assignment \(\vec{x}^{\star}\) that validates under IEEE‑754;
(2) if optimization terminates with a strictly positive minimum, we return \textbf{unsat‑guess}—not a proof, but a confident verdict supported by our evaluation; and
(3) if the time budget expires without a decision, we return \textbf{timeout}.

\paragraph{Three‑stage overview.}
Figure~\ref{fig:overview} sketches \textsc{StageSAT}—a staged solver that balances searchability and bit‑precision, moving from smooth guidance to bit‑exactness and discrete snapping. We keep mathematics light here and refer to \S\ref{sec:theory} for precise objective definitions, weak‑distance contracts, S1’s local monotone‑descent guarantee, and S3’s discrete refinement strategy.
\begin{itemize}
  \item \textbf{S1 — projection‑aided squared objective (fast descent).}
  We start with a fast, albeit coarse optimization toward feasibility.
  We separate constraints into \(\mathcal{L}\) (linear equalities) and \(\mathcal{N}\) (nonlinear/inequalities), and build an objective of the form \(R_{\mathcal{L}} + R_{\mathcal{N}}\).
  Here \(R_{\mathcal{N}}\) is a sum of \emph{squared} residuals/violations (as in traditional optimization‑based encodings), while \(R_{\mathcal{L}}\) uses an \emph{orthogonal projection} onto the linear manifold to compute distance exactly.
  The projection fixes the non‑monotone behavior illustrated in Example~\ref{ex:mdp}, yielding a \emph{partial} monotone‑descent effect and a robust early trajectory.
  \item \textbf{S2 — ULP\(^2\) objective (bit‑level alignment).}
  We then switch to a \emph{squared ULP distance} per constraint to align the objective with IEEE‑754 semantics.
  Squaring shapes the penalty (amplifies large gaps, breaks ties), providing a stable coarse‑to‑fine descent and exposing bit‑level issues (e.g., subnormal underflow) that magnitude‑based S1 can miss.
  \item \textbf{S3 — \(n\)‑ULP refinement over the floating‑point lattice.}
  Finally, we perform a bounded, \emph{discrete} search around the S2 incumbent on the FP lattice to \emph{snap} near‑misses to exact zeros that continuous search cannot cross.
\end{itemize}

\noindent
\textit{Outcomes and guarantees.} 

StageSAT yields one of \(\{\textbf{sat},\,\textbf{unsat‑guess},\,\textbf{timeout}\}\).
A \textbf{sat} result carries a \emph{validated} model (bit‑exact under IEEE‑754).
An \textbf{unsat‑guess} is a \emph{non‑proof} verdict (positive minimum) that our experiments show to be highly consistent with complete solvers while scaling to larger instances (see \S\ref{sec:Experiments}).
A \textbf{timeout} indicates that within the allotted budget neither a model nor a stable positive minimum was reached; in practice, under the budgets in \S\ref{sec:Experiments}, this is uncommon.

\begin{figure}[htbp]
  \centering
  \fbox{\parbox{0.92\linewidth}{
  \textbf{S1 (projection‑aided squared)} \(\rightarrow\)
  \textbf{S2 (ULP\(^2\))} \(\rightarrow\)
  \textbf{S3 (\(n\)‑ULP refinement over FP lattice)}. \\
  Outcomes: \textbf{sat} (validated model) \(/\) \textbf{unsat‑guess} (positive minimum; non‑proof) \(/\) \textbf{timeout}.
  }}
  \caption{Three‑stage solving flow in \textsc{StageSAT}. S1 provides fast descent with projection on linear equalities; S2 enforces bit‑level correctness via squared ULP distance; S3 discretely refines on the floating‑point lattice. Formal objectives and guarantees appear in \S\ref{sec:theory}.}
  \label{fig:overview}
\end{figure}

\subsection{Example}
\label{ex:mdp}

\paragraph{Intuition.}
We view any constraint as a conjunction \( \mathcal{C}\equiv \mathcal{L}\wedge\mathcal{N} \), where \(\mathcal{L}\) collects the linear equalities over FP variables (considered over \(\mathbb{R}\) for geometry) and \(\mathcal{N}\) contains the remaining non‑linear and/or inequality atoms. Linear equalities are common in control and signal‑processing models.

Our design principle is intuitive: as the objective decreases, the assignment should move \emph{closer} to the model set \( \mathcal{M}=\{\vec{x}\mid \vec{x}\models\mathcal{C}\} \).  
If we could compute the Euclidean distance \(d(\vec{x},\mathcal{M})\) to the model set,
then minimizing this function to zero would solve the satisfiability problem directly and would automatically satisfy our design principle. In practice, existing optimization‑based solvers replace this ideal yet intractable distance with a proxy \(R(\vec{x})\) that satisfies two basic properties:
(i) \(R(\vec{x})\ge 0\) and (ii) \(R(\vec{x})=0 \Rightarrow \vec{x}\models \mathcal{C}\).
Such proxies enable search but need not align with geometric distance; as our toy example shows, they can violate the intended "closer \(\implies\) smaller" behavior and mislead optimization.

Our key observation is that for the linear‑equality portion of a constraint, \(\mathcal{L}={\vec{x}\mid A\vec{x}==\vec{b}}\), we can compute the exact squared distance to the solution manifold without solving the equations: orthogonally project \(\vec{x}\) to \(\mathcal{L}\) (via the Moore–Penrose pseudoinverse \cite{Penrose_1955}) and measure the squared gap to that projection. This gives a closed‑form “distance‑to‑\(\mathcal{L}\)” term that directly enforces the Monotone Descent principle on the linear part.

Our Stage 1 combines this geometric term with a simple representing function for the remaining atoms \(\mathcal{N}\) (non‑linear and/or inequalities):
$$
S_1(\vec{x}) = d(\vec{x},\mathcal{L})^2 + R_{\mathcal{N}}(\vec{x}).
$$
Here \(R_{\mathcal{N}}\ge 0\) and \(R_{\mathcal{N}}(\vec{x})=0\) iff all atoms in \(\mathcal{N}\) hold at \(\vec{x}\). As a result, Stage 1 preserves the soundness contract (non‑negative; zero only at true models) while guaranteeing that any strict decrease in the objective \textbf{necessarily reduces the distance to} \(\mathcal{L}\). Because both the projection and the evaluation are performed in floating point, Stage 1 is used to \emph{guide search} rather than to certify SAT.

The later stages supply bit‑level decision power. Stage 2 switches to an IEEE‑aligned, squared‑ULP representing function that returns zero iff all literals are satisfied as floats. Stage 3 performs a bounded (n)‑ULP refinement on the floating‑point lattice around Stage 2’s incumbent, snapping near‑miss candidates to exact models when continuous steps cannot cross the last ULP gap. The overall strategy is thus: measure \emph{true distance where we can} (the linear part), and use \emph{IEEE‑faithful objectives where we must} (ULP‑based stages), so that decreases in the objective reliably correspond to moving closer to a model.

Here we show how a naive choice breaks the principle (monotone descent) and how projection fixes it, then explain why each stage is needed for efficient IEEE‑faithful decisions.

\paragraph{From \(f_{\text{naive}}\) to \(S_1\): why projection.}

\emph{Toy constraint.} \(x == 1 \;\wedge\; y == x\). The unique model is \((1,1)\); the linear manifold is the single point \((1,1)\).

\emph{Naïve objective and its failure.} A natural first attempt is
\[
f_{\text{naive}}(x,y) \;=\; (x-1)^2 \;+\; (y-x)^2 .
\]
It measures residuals of the two equations, but it does \emph{not} align with our monotone‑descent intent: moving from \((2,2)\) to \((2,1)\) gets \emph{closer} to \((1,1)\) in Euclidean distance, yet \(f_{\text{naive}}\) \emph{increases}. Intuitively, \((y-x)^2\) penalizes motion along the manifold’s tangent direction, so the objective can go up even as the point approaches the model.

\begin{center}
\begin{tabular}{lcc}
\hline
\textbf{Point \((x,y)\)} & \textbf{Distance to \((1,1)\)} & \(\mathbf{f_{\text{naive}}(x,y)}\) \\
\hline
(2, 2) & 1.414 & 1.000 \\
(2, 1) & \textbf{1.000} & \textbf{2.000} \\
(1, 1) & 0.000 & 0.000 \\
(0, 1) & \textbf{1.000} & \textbf{2.000} \\
(0, 0) & 1.414 & 1.000 \\
\hline
\end{tabular}
\end{center}

\emph{Projection: make the proxy be the distance itself.} On \(\mathcal{L}\), we measure \emph{distance to the feasible set} rather than residuals to its defining equations. Concretely for this toy, the orthogonal projection of any \((x,y)\) onto the linear manifold is \((1,1)\). The Stage 1 objective becomes
\[
\boxed{\,S_1(x,y) \;=\; (x-1)^2 \;+\; (y-1)^2.\,}
\]
Because \(S_1\) is (by construction) the squared distance to the linear solution set, any \emph{strict decrease} in \(S_1\) necessarily means the assignment moved \emph{closer} to that set. This eliminates the tangential artifact in \(f_{\text{naive}}\) and restores predictable descent.

\emph{Why S1 does not declare SAT.} S1 is a \emph{search} stage. The projection and the distance are computed in floating point, so a tiny near‑zero can be a rounding artifact. Even when \(S_1\) is numerically small, we \emph{do not} certify SAT at S1. Instead, S1 supplies a strong initializer for the bit‑precise stages:
\begin{itemize}
  \item \emph{S2} uses pairwise \(\mathrm{ULP}^2\) to decide SAT \emph{exactly} when the objective reaches zero.
  \item \emph{S3} performs a tiny \(n\)-ULP lattice refinement around S2’s incumbent, snapping near‑misses to exact equality (SAT), else producing \emph{unsat‑guess} or \emph{timeout}.
\end{itemize}

\paragraph{Stage 2 (S2): Squared‑ULP objective — a bit‑precise representing function.}

After S1 brings us near feasibility, Stage 2 switches to a \emph{bit‑level} view aligned with IEEE‑754. Intuitively, \(\mathrm{ULP}(a,b)\) counts how many representable FP values (“steps” on the lattice) separate \(a\) and \(b\); it is 0 iff the floats are exactly equal.

\begin{itemize}
  \item \emph{Equality literal} \(f(\vec{x}) \mathtt{==} g(\vec{x})\): distance is \(\mathrm{ULP}(f(\vec{x}),g(\vec{x}))\).
  \item \emph{Inequality literal} \(f(\vec{x}) \le g(\vec{x})\): use 0 when the inequality holds; otherwise, the minimal ULP steps needed to make it true (with the standard ±1 correction for strictness).
  \item \emph{Clause} (disjunction): multiply the \emph{squared} ULP distances of its literals so that any satisfied literal zeros the product; the full S2 objective \emph{sums products over clauses}.
\end{itemize}

For the toy (each equality is its own clause), the \emph{concrete} Stage 2 objective is
\[
\boxed{\,S_2(x,y) \;=\; \mathrm{ULP}(x,1)^2 \;+\; \mathrm{ULP}(y,x)^2.\,}
\]
By construction, \(S_2(\vec{x}) \ge 0\) and \(S_2(\vec{x})=0\) \emph{iff} all constraints hold \emph{exactly} in IEEE‑754; i.e., \(S_2\) is a \emph{representing function} for \(\mathcal{C}\). Consequently, \emph{S2 can declare SAT} when its minimum reaches \(0\).

\emph{Near‑miss \(\to\) exact.} Suppose S1 yields \((x,y)\) with \(x\) one ULP above \(1\) and \(y\) two ULPs above \(x\). Then \(S_2 = 1^2 + 2^2 = 5>0\). S2 continues to adjust \((x,y)\) until both gaps are zero; on this toy it typically reaches \((1,1)\), achieving \(S_2(1,1)=0\) and reporting \emph{SAT}.

\paragraph{Stage 3 (S3): \(n\)-ULP refinement — discrete search for the last bit.}

Even with ULP‑aware optimization, a continuous method can stall a \emph{few ULPs} from equality—especially near subnormals or in tight chains of equalities—because continuous steps cannot “snap” across the final discrete gap. \emph{Stage 3} resolves this via a tiny \emph{discrete search} on the FP lattice around S2’s incumbent \((x_2^*,y_2^*)\).

Let \(\mathrm{nULP}(k,z)\) move a float \(z\) by \(k\) ULPs (positive: \texttt{nextUp}; negative: \texttt{nextDown}). We evaluate the same ULP penalties at stepped points and minimize over tiny integer offsets \(m\) (for \(x\)) and \(n\) (for \(y\)):
\[
\boxed{\,S_3(m,n) \;=\; \mathrm{ULP}\!\big(\mathrm{nULP}(m,x_2^*),\,1\big)^2
\;+\; \mathrm{ULP}\!\big(\mathrm{nULP}(n,y_2^*),\,\mathrm{nULP}(m,x_2^*)\big)^2.\,}
\]
If the best value over a small bounded neighborhood hits \(0\), S3 \emph{declares SAT} with the corresponding stepped assignment; if the minimum remains \emph{strictly positive}, we return \emph{unsat‑guess} (not a proof but empirically highly accurate); if the budget elapses first, \emph{timeout}.

\emph{Why S3 is necessary (intuition).} In a 20‑variable chain \(x_2{=}x_1, x_3{=}x_2,\ldots,x_{20}{=}x_{19}\) anchored at a tiny \(x_1{=}c\), S1 pulls toward the affine manifold and S2 aligns with bit‑level equality, but one or two coordinates can remain a couple of ULPs off—continuous updates disturb neighbors or overshoot at these scales. S3’s bounded lattice search (e.g., steps in \(\{-1,0,1\}\)) jointly tweaks last bits and snaps the whole vector to the exact lattice point, driving the objective to \(0\) when a model exists.

\paragraph{Takeaway.}
Replacing \(f_{\text{naive}}\) with the \emph{projection‑based distance} \(S_1\) restores the intended "closer $\implies$ smaller" geometry on \(\mathcal{L}\) and yields a robust initializer. \emph{S2} contributes an \emph{IEEE‑faithful} \(\mathrm{ULP}^2\) \emph{representing function} that certifies SAT when it reaches zero. \emph{S3} adds a small \(n\)-ULP \emph{lattice refinement} to bridge the last‑bit gap when smooth optimization stalls. Together, the pipeline \(S_1 \rightarrow S_2 \rightarrow S_3\) provides predictable progress, exact satisfiability decisions when possible, and principled \emph{unsat‑guess} or \emph{timeout} otherwise.

\section{Technical Approach and Theory}
\label{sec:theory}

This section makes precise how \textsc{StageSAT} works and why it is effective. We formalize the three stages and establish three facts:
(i) \textbf{Stage~1} satisfies a \textbf{partial monotone descent} guarantee on the \textbf{linear} part;
(ii) \textbf{Stage~2} and \textbf{Stage~3} are \textbf{representing functions} for the constraint set; and
(iii) the overall procedure is \textbf{sound} (no false SAT), while necessarily \textbf{incomplete}.
These results explain \textsc{StageSAT}'s \textbf{accuracy} and \textbf{scalability} in practice.

\subsection{Notation for QF\_FP setting}

\paragraph{Language.}
We consider quantifier‑free floating‑point (IEEE‑754) formulas. Terms are built from FP variables and constants using standard IEEE‑754 arithmetic; atoms have the form \(e_1 \ \mathsf{op}\ e_2\) with \(\mathsf{op}\in\{\mathtt{==},\le,<,\ge,>\}\).
Mixed precision (FP32/FP64) is allowed; each atom is evaluated in its declared format.

\paragraph{CNF and models.}
Let \(\mathcal{C}\) be a \emph{CNF}: a conjunction of clauses \(\phi\) (each clause is a disjunction of literals).
We write \(\vec{x}\models\mathcal{C}\) if all clauses evaluate to true under IEEE‑754 at \(\vec{x}\).
The \emph{model set} is \(\mathcal{M}_{\mathcal{C}}:=\{\vec{x}:\vec{x}\models\mathcal{C}\}\).
For brevity we use \(\sum_{\phi\in\mathcal{C}}(\cdot)\) (or simply \(\sum_{\phi}(\cdot)\)) to denote summation over clauses.

\paragraph{Euclidean distance to a set.}
For \(S\subseteq\mathbb{R}^d\), the (Euclidean) distance is
\[
d(\vec{x},S)\ :=\ \inf_{\vec{y}\in S}\ \|\vec{x}-\vec{y}\|_2.
\]
When \(S\) is an affine set (e.g., a linear manifold), the infimum is attained at the \emph{orthogonal projection} of \(\vec{x}\) onto \(S\).

\medskip
\noindent\emph{We keep formal objective definitions local to each stage to avoid redundancy and to highlight their distinct roles.}

\subsection{Stage~1 (S1): projection‑aided square objective and partial monotone descent}

Let \(\mathcal{L}=\{\vec{x}:A\vec{x}=\vec{b}\}\) denote the \emph{linear equalities} extracted from \(\mathcal{C}\), and let \(\mathcal{N}\) be the remaining atoms (nonlinear equalities and all inequalities).

\paragraph{Square distance for literals and clauses (multiplication in \(\lor\)).}
For a literal \(\ell\):
\begin{itemize}
  \item equality \(h(\vec{x})\ \mathtt{==}\ 0\): \(\mathrm{dist}_{\square}(\ell;\vec{x}) := h(\vec{x})^2\);
  \item inequality \(f(\vec{x})\le g(\vec{x})\): \(\mathrm{dist}_{\square}(\ell;\vec{x}) := \max\{0,\,f(\vec{x})-g(\vec{x})\}^2\).
\end{itemize}
For a \emph{clause} \(\phi\) (a disjunction of literals), we encode \(\lor\) by \emph{multiplication}:
\[
\mathrm{Dist}_{\square}(\phi;\vec{x})
\ :=\
\prod_{\ell\in\phi}\ \mathrm{dist}_{\square}(\ell;\vec{x}),
\]
so if \emph{any} literal in \(\phi\) is satisfied, one factor is \(0\) and the clause distance is \(0\) (as in XSat).
Let \(\mathcal{C}_{\mathcal{N}}\) be the set of clauses formed only from atoms in \(\mathcal{N}\). The \emph{S1 objective} is
\[
S_1(\vec{x})
\ =\
\underbrace{\|\vec{x}-\Pi_{\mathcal{L}}(\vec{x})\|_2^2}_{\text{exact squared distance to } \mathcal{L}}
\;+\;
\sum_{\phi\in\mathcal{C}_{\mathcal{N}}}\ \mathrm{Dist}_{\square}(\phi;\vec{x}).
\]

\paragraph{Projection (definition and closed form).}
\[
\Pi_{\mathcal{L}}(\vec{x})\ =\ \vec{x} - A^\top\!\big(AA^\top\big)^{\dagger}\!(A\vec{x}-\vec{b}),
\qquad
\|\vec{x}-\Pi_{\mathcal{L}}(\vec{x})\|_2^2
\ =\ \Big\|A^\top\!\big(AA^\top\big)^{\dagger}\!(A\vec{x}-\vec{b})\Big\|_2^2,
\]
with \((\cdot)^{\dagger}\) the Moore–Penrose pseudoinverse\cite{Penrose_1955}.

\paragraph{Monotone Descent Property (MDP)}
For an objective \(R\), we say \emph{\(R\) satisfies monotone descent for \(\mathcal{C}\)} if
\[
\boxed{\quad
R(\vec{x}')\ <\ R(\vec{x})
\ \Longrightarrow\
d(\vec{x}',\,\mathcal{M}_{\mathcal{C}})\ <\ d(\vec{x},\,\mathcal{M}_{\mathcal{C}}).
\quad}
\]

\paragraph{\textbf{Lemma 1 (projection facts).}}
\(\Pi_{\mathcal{L}}(\vec{x})\) is the unique point in \(\mathcal{L}\) closest to \(\vec{x}\), and \(\|\vec{x}-\Pi_{\mathcal{L}}(\vec{x})\|_2=d(\vec{x},\mathcal{L})\).
The closed forms above hold.

\paragraph{\textbf{Lemma 2 (partial MDP for S1 when \(\mathcal{N}\) holds).}}
If \(\vec{x}\models \mathcal{N}\) and \(\vec{x}'\models \mathcal{N}\), then
\[
S_1(\vec{x}')\ <\ S_1(\vec{x})
\ \Longleftrightarrow\
\|\vec{x}'-\Pi_{\mathcal{L}}(\vec{x}')\|_2^2\ <\ \|\vec{x}-\Pi_{\mathcal{L}}(\vec{x})\|_2^2,
\]
and therefore \(d(\vec{x}',\mathcal{M}_{\mathcal{C}})<d(\vec{x},\mathcal{M}_{\mathcal{C}})\), since under \(\mathcal{N}\) the (local) model set coincides with \(\mathcal{L}\).
\emph{Interpretation.} When the nonlinear/inequality part already holds, S1’s decrease exactly tracks movement toward the model set.
S1 provides a strong \emph{starting point}; it is \emph{not} used to declare SAT.

\subsection{Stage~2 (S2): squared‑ULP objective is a representing function}

We now align the objective with IEEE‑754 semantics, again respecting CNF by multiplying inside each disjunction.

\paragraph{ULP distance for literals and clauses (multiplication in \(\lor\)).}
For a literal \(\ell\):
\begin{itemize}
  \item equality \(f(\vec{x})\ \mathtt{==}\ g(\vec{x})\): \(d_{\mathrm{ulp}}(\ell;\vec{x})\) is the number of adjacent FP steps (via IEEE \texttt{nextUp}/\texttt{nextDown}) between the IEEE values of \(f(\vec{x})\) and \(g(\vec{x})\) (0 iff bit‑equal);
  \item inequality \(f(\vec{x})\le g(\vec{x})\): \(d_{\mathrm{ulp}}(\ell;\vec{x})=0\) if it holds; otherwise the \emph{minimal} FP steps needed to make it hold (used purely as a distance).
\end{itemize}
For a \emph{clause} \(\phi\), define
\[
\mathrm{Dist}_{\mathrm{ulp}}(\phi;\vec{x})
\ :=\
\prod_{\ell\in\phi}\ d_{\mathrm{ulp}}(\ell;\vec{x})^2,
\]
so a satisfied literal forces the product to \(0\).

\paragraph{Representing function and S2 definition.}
A scalar \(R\) is a \emph{representing function} for \(\mathcal{C}\) if
\[
R(\vec{x})\ \ge\ 0
\quad\text{and}\quad
R(\vec{x})=0\ \Longleftrightarrow\ \vec{x}\models\mathcal{C}.
\]
We set
\[
S_2(\vec{x})\ :=\ \sum_{\phi\in\mathcal{C}}\ \mathrm{Dist}_{\mathrm{ulp}}(\phi;\vec{x}).
\]

\paragraph{\textbf{Lemma 3 (S2 is representing).}}
\(S_2\) is a representing function for \(\mathcal{C}\).
\emph{Reason.} Each clause distance is non‑negative and is \(0\) iff the clause is satisfied; the sum is \(0\) iff all clauses are satisfied, i.e., \(\vec{x}\models\mathcal{C}\).

\subsection{Stage~3 (S3): n‑ULP refinement is a representing function (discrete steps)}

Let \(\vec{x}_2^{\star}\) be the \emph{current best assignment returned by S2}. For an integer vector \(\vec{n}\) (signed ULP steps applied component‑wise via IEEE \texttt{nextUp}/\texttt{nextDown}), define the clause distance at the \emph{stepped} point and sum over clauses:
\[
\mathrm{Dist}_{\mathrm{ulp}}\!\big(\phi;\ \mathrm{nULP}(\vec{n},\vec{x}_2^{\star})\big)
\ :=\
\prod_{\ell\in\phi}\ d_{\mathrm{ulp}}\!\big(\ell;\ \mathrm{nULP}(\vec{n},\vec{x}_2^{\star})\big)^2,
\]
\[
S_3(\vec{n})\ :=\ \sum_{\phi\in\mathcal{C}}\ \mathrm{Dist}_{\mathrm{ulp}}\!\big(\phi;\ \mathrm{nULP}(\vec{n},\vec{x}_2^{\star})\big).
\]

\paragraph{\textbf{Lemma 4 (S3 is representing at stepped points).}}
For all \(\vec{n}\), \(S_3(\vec{n})\ge 0\) and \(S_3(\vec{n})=0\ \Longleftrightarrow\ \mathrm{nULP}(\vec{n},\vec{x}_2^{\star})\models\mathcal{C}\).
\emph{Interpretation.} S3 searches the FP lattice around the S2 result and can \emph{snap} a near‑miss to an exact model while preserving CNF semantics (product for \(\lor\), sum for \(\land\)).

\subsection{Overall procedure and guarantees}
\label{sec:overall-procedure}

\paragraph{Inputs.}
\begin{itemize}
  \item A CNF constraint $\mathcal{C}$ over \textsc{QF\_FP} atoms.
  \item A black-box $\textsc{MP\_Inverse}(M)$ that returns the Moore--Penrose pseudoinverse $M^{\dagger}$.
  \item A black-box global minimizer $\textsc{Global\_Min}(F,\mathrm{init},\mathrm{budget}) \to (\hat{\vec{x}}, F(\hat{\vec{x}}), \texttt{status})$.
  \item Stage budgets $\mathrm{budget}_1,\mathrm{budget}_2,\mathrm{budget}_3$.
\end{itemize}

\paragraph{Algorithm.}
\begin{enumerate}
  \item  \textbf{ Build the objective functions from $\mathcal{C}$.}
  Collect all linear equalities into a system $A\vec{x} == \vec{b}$ and let the remaining atoms form $\mathcal{N}$. 
  
  \begin{enumerate}
    \item[]   Define $S_1(\vec{x})$ as the sum of: (i) the squared distance from $\vec{x}$ to $\mathcal{L}$ via $\Pi_{\mathcal{L}}$ and (ii) the clause-level product of squared violation terms for $\mathcal{C}_N$ (cf.\ Section~\ref{sec:theory}).\;
    \item[]  Define $S_2(\vec{x})$ as the sum over clauses of the product of squared per-literal ULP distances (IEEE-aligned representing function; cf.\ Section~\ref{sec:theory}).\;
    \item[] Given an incumbent $\vec{x}_2^{\star}$ from $S_2$, define $S_3(\vec{n};\vec{x}_2^{\star})$ as the $S_2$ objective evaluated at the FP-lattice point $\mathrm{nULP}(\vec{n},\vec{x}_2^{\star})$ obtained by stepping each coordinate by $n_j$ ULPs (cf.\ Section~\ref{sec:theory}).\;
  \end{enumerate}

  \item \textbf{Minimize S1 (fast descent).}
  $(\vec{x}_1^{\star}, v_1, \_) \leftarrow \textsc{Global\_Min}\!\big(S_1,\ \textsc{MultiStartBox}(\mathcal{C}),\ \mathrm{budget}_1\big)$.
  \emph{Note:} we do \emph{not} decide SAT here.

  \item \textbf{Minimize S2 (bit-precise optimization).}
  $(\vec{x}_2^{\star}, v_2, \_) \leftarrow \textsc{Global\_Min}\!\big(S_2,\ \vec{x}_1^{\star},\ \mathrm{budget}_2\big)$.
  If $v_2 = 0$, return \textbf{sat} with $\vec{x}_2^{\star}$.

  \item \textbf{Minimize S3 (n-ULP refinement over the FP lattice).}
  Choose per-dimension integer step bounds $\vec{N}$.
  Run a bounded discrete search on $\vec{n}\in[-\vec{N},\vec{N}]\cap\mathbb{Z}^d$ to minimize $S_3(\vec{n};\vec{x}_2^{\star})$.
  If the best value $v_3 = 0$, return \textbf{sat} with $\mathrm{nULP}(\vec{n}^{\star},\vec{x}_2^{\star})$.

  \item Otherwise, return \textbf{unsat-guess} with score $\min(v_2, v_3)$, or \textbf{timeout} 
  if the time budget expires. 
\end{enumerate}

\begin{lemma}[Outcome trichotomy]
\label{thm:trichotomy}
For any CNF \(\mathcal{C}\) over IEEE-754 atoms and any finite time budget, the procedure above returns exactly one of \(\{\textbf{sat},\ \textbf{unsat-guess},\ \textbf{timeout}\}\).
\end{lemma}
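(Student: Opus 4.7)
The statement is essentially a control-flow property of the five-step procedure, so my plan is a straightforward case analysis on which return statement fires, plus an argument that at least one must fire under any finite budget. I would not need any numerical or geometric content beyond the basic fact that each \textsc{Global\_Min} call either terminates within its stage budget or causes the global budget to elapse.

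First I would fix a run of the procedure on an arbitrary CNF $\mathcal{C}$ and walk through the five steps in order. Step~1 constructs $S_1,S_2,S_3$ and produces no output. Step~2 invokes \textsc{Global\_Min} on $S_1$ under $\mathrm{budget}_1$ and likewise produces no decision. The only three sites that can emit a verdict are: (a) Step~3, which returns \textbf{sat} iff $v_2=0$; (b) Step~4, which returns \textbf{sat} iff $v_3=0$; and (c) Step~5, which returns either \textbf{unsat-guess} (with score $\min(v_2,v_3)$) or \textbf{timeout} if the global budget has expired. For \emph{mutual exclusivity}, I would point out that each of these sites is guarded by an early return: if Step~3 fires, Step~4 is never reached; if Step~4 fires, Step~5 is never reached. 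So at most one verdict is emitted, and it lies in $\{\textbf{sat},\textbf{unsat-guess},\textbf{timeout}\}$.

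For \emph{exhaustiveness}, I would argue that at least one site must fire. Since the stage budgets $\mathrm{budget}_1,\mathrm{budget}_2,\mathrm{budget}_3$ are finite and \textsc{Global\_Min} is specified to return under its budget, Steps~2--4 terminate. If either $v_2=0$ or $v_3=0$ along the way, a \textbf{sat} verdict is issued in Step~3 or Step~4. Otherwise, both minimizations return strictly positive values and control necessarily reaches Step~5, where the procedure emits \textbf{unsat-guess} (when the time budget still holds) or \textbf{timeout} (when it has elapsed). In every possible execution trace, exactly one of the three verdicts is produced.

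The only subtle point—and the single place I would slow down—is the interplay between the per-stage budgets and the global time budget in Step~5. Strictly speaking, the global budget can elapse \emph{during} Step~2, Step~3, or Step~4, not only at Step~5. I would therefore make explicit the convention (implicit in the algorithm) that whenever the global budget expires at any point before a \textbf{sat} verdict has been committed, control jumps to Step~5 and the procedure returns \textbf{timeout}. With that convention stated, the trichotomy follows by direct inspection of the four possible endings (sat at Step~3, sat at Step~4, unsat-guess at Step~5, timeout at Step~5). I expect this bookkeeping about budget preemption to be the only nontrivial aspect; everything else is purely a matter of reading the pseudocode.
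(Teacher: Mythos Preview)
Your proposal is correct and takes essentially the same approach as the paper: a direct case analysis on the control flow, noting that Steps~3--4 are the only \textbf{sat} sites, Step~5 handles \textbf{unsat-guess} and \textbf{timeout}, and that these outcomes are mutually exclusive and exhaustive. Your treatment is in fact more careful than the paper's one-paragraph proof, particularly in making explicit the budget-preemption convention that the paper leaves implicit.
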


\begin{proof}
By case analysis on the control flow. Steps (3)--(4) return \textbf{sat} upon reaching an objective value \(0\) with validation; otherwise, upon termination with a strictly positive best value, step (5) returns \textbf{unsat-guess}; if the time budget elapses before either condition, step (4) returns \textbf{timeout}. These cases are mutually exclusive and exhaustive.
\end{proof}

\begin{theorem}[SAT soundness]
\label{thm:sat-soundness}
If the procedure returns \textbf{sat} with assignment \(\hat{\vec{x}}\), then \(\hat{\vec{x}}\models\mathcal{C}\).
\end{theorem}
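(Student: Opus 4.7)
The plan is to carry out a case analysis on the two code paths that can issue a \textbf{sat} verdict, and to finish each case by invoking the representing-function lemma for the corresponding stage. By inspecting the procedure in Section~\ref{sec:overall-procedure}, I observe that only step~(3) and step~(4) can return \textbf{sat}: step~(3) fires when the best value $v_2$ of $S_2$ equals $0$ and returns $\hat{\vec{x}}=\vec{x}_2^{\star}$; step~(4) fires when the best value $v_3$ of $S_3$ over the bounded integer lattice $[-\vec{N},\vec{N}]\cap\mathbb{Z}^d$ equals $0$ and returns $\hat{\vec{x}}=\mathrm{nULP}(\vec{n}^{\star},\vec{x}_2^{\star})$. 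Step~(2) issues no verdict at all, and step~(5) only produces \textbf{unsat-guess} or \textbf{timeout}, so this two-case split is exhaustive.

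For the first case, I will appeal directly to Lemma~3: since $S_2$ is a representing function for $\mathcal{C}$, the equation $S_2(\vec{x}_2^{\star})=0$ is equivalent to $\vec{x}_2^{\star}\models\mathcal{C}$, which is exactly the desired conclusion. For the second case the argument is symmetric: Lemma~4 asserts that $S_3(\vec{n})=0$ iff $\mathrm{nULP}(\vec{n},\vec{x}_2^{\star})\models\mathcal{C}$, so instantiating at $\vec{n}^{\star}$ yields $\hat{\vec{x}}\models\mathcal{C}$. A nice feature of this strategy is that neither argument depends on how the incumbents $\vec{x}_2^{\star}$ or $\vec{n}^{\star}$ are produced: the representing-function property alone does the work, which matters because \textsc{Global\_Min} and the bounded lattice search are treated as black boxes and their internal behavior (convergence speed, local minima, heuristics) is irrelevant to soundness.

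The one subtlety I expect to require careful justification is bridging the runtime test ``$v=0$'' with the mathematical zero condition demanded by Lemmas~3--4. My plan is to note that both $S_2$ and $S_3$ are sums over clauses of products of \emph{nonnegative integer-valued} squared ULP distances, and that such a sum-of-products is zero if and only if every constituent ULP distance is zero; since the integer $0$ is representable exactly in IEEE-754, the comparison ``$v=0$'' performed at runtime faithfully witnesses $R(\hat{\vec{x}})=0$ in the sense of the lemmas, with no tolerance introduced. With that observation secured, the two cases combine immediately to give $\hat{\vec{x}}\models\mathcal{C}$, completing the proof.
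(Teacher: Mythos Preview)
Your proof is correct and follows essentially the same route as the paper: a case split on the two return paths (S2 and S3), each closed by invoking the corresponding representing-function lemma (Lemma~3 and Lemma~4). The only minor difference is that where you justify exactness of the runtime test ``$v=0$'' via the integer-valued nature of the squared-ULP objectives, the paper instead appeals to an additional IEEE-754 validation step performed before returning; both justifications are sound and serve the same purpose.
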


\begin{proof}
A \textbf{sat} result arises only in S2 or S3. In S2, \(S_2(\hat{\vec{x}})=0\) and Lemma~3 (S2 is representing) imply \(\hat{\vec{x}}\models\mathcal{C}\). In S3, \(S_3(\vec{n}^{\star})=0\) and Lemma~4 (S3 is representing at stepped points) imply \(\mathrm{nULP}(\vec{n}^{\star},\vec{x}_2^{\star})\models\mathcal{C}\), which is exactly the returned \(\hat{\vec{x}}\). In both cases, we additionally validate under IEEE-754 before returning. Hence any \textbf{sat} output is a genuine model of \(\mathcal{C}\).
\end{proof}

\begin{corollary}[No false SAT]
\label{cor:no-false-sat}
The procedure never reports \textbf{sat} on a non-model.
\end{corollary}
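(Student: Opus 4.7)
The plan is to obtain the corollary as an immediate contrapositive of Theorem~\ref{thm:sat-soundness}. I would first observe that ``reports \textbf{sat} on a non-model'' is exactly the negation of the conclusion of Theorem~\ref{thm:sat-soundness}: the theorem asserts ``$\text{returns } \textbf{sat} \text{ with } \hat{\vec{x}} \Rightarrow \hat{\vec{x}}\models\mathcal{C}$'', whose contrapositive is ``$\hat{\vec{x}}\not\models\mathcal{C} \Rightarrow$ the procedure does not return \textbf{sat} with $\hat{\vec{x}}$''. Thus the corollary carries no new content beyond Theorem~\ref{thm:sat-soundness}; it is a reformulation suited to the user-facing guarantee (``never a spurious SAT'').

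Concretely, I would argue by contradiction: suppose the procedure returned \textbf{sat} together with some assignment $\hat{\vec{x}}$ with $\hat{\vec{x}}\not\models\mathcal{C}$. By Lemma~\ref{thm:trichotomy}, the \textbf{sat} branch is reached only in step (3) (because $v_2=0$) or step (4) (because $v_3=0$ at the refined point $\mathrm{nULP}(\vec{n}^{\star},\vec{x}_2^{\star})$). In the first case, Lemma~3 gives $\hat{\vec{x}}\models\mathcal{C}$; in the second, Lemma~4 gives $\mathrm{nULP}(\vec{n}^{\star},\vec{x}_2^{\star})\models\mathcal{C}$, which is the returned $\hat{\vec{x}}$. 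Either way we contradict $\hat{\vec{x}}\not\models\mathcal{C}$.

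I would also note the role of the explicit IEEE-754 validation step mentioned at the end of the proof of Theorem~\ref{thm:sat-soundness}: it guards against any residual mismatch between the modeled semantics of $S_2,S_3$ and their actual floating-point evaluation, so that the implication ``objective equals $0$ $\Rightarrow$ model'' is not merely mathematical but holds for the implementation. This makes the corollary a statement about the implemented procedure, not only its abstract specification.

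The main (and really only) obstacle here is purely presentational: there is no new mathematical content, so the proof should be kept to a single short paragraph to avoid giving the impression that a separate argument is needed. I would therefore write it essentially as: \emph{``Immediate from Theorem~\ref{thm:sat-soundness} by contrapositive.''} with one sentence naming the two \textbf{sat}-producing branches and citing Lemmas~3 and~4 for completeness.
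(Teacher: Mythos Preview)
Your proposal is correct and matches the paper's own proof, which is simply ``Immediate from Theorem~\ref{thm:sat-soundness}.'' Your elaboration via contrapositive and the case split on the two \textbf{sat}-producing branches is sound but unnecessary here, as you yourself note in your final paragraph.
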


\begin{proof}
Immediate from Theorem~\ref{thm:sat-soundness}.
\end{proof}

\begin{proposition}[Incompleteness]
\label{prop:incompleteness}
There exist satisfiable \(\mathcal{C}\) and finite budgets for which the procedure returns \textbf{unsat-guess} or \textbf{timeout}.
\end{proposition}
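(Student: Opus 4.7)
The plan is to prove this existential claim by exhibiting a single satisfiable witness \(\mathcal{C}^{\star}\) together with finite stage budgets for which the procedure verifiably cannot return \textbf{sat}. Because the statement is \(\exists \mathcal{C}\,\exists\,\text{budgets}\), no detailed analysis of any specific optimizer is required; it suffices to arrange the model set so that it is provably \emph{unreachable} both by \textsc{Global\_Min} under its finite budget and by the bounded S3 lattice scan. A convenient witness is a formula whose model set is a tiny, isolated FP region far (in ULP distance) from any standard multistart initialization, e.g.\ \(\mathcal{C}^{\star}\equiv (x>0)\wedge(x\le 2^{-1000})\) over FP64, whose models are all subnormals.

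The key steps, in order, are as follows. First, observe that \(\mathcal{C}^{\star}\) is satisfiable (any subnormal in \((0,2^{-1000}]\) works) and that Step 2 (S1) cannot declare \textbf{sat} by algorithmic construction. Second, fix a finite \(\mathrm{budget}_2\); then \textsc{Global\_Min} performs only finitely many evaluations of \(S_2\), and by taking \textsc{MultiStartBox}\((\mathcal{C}^{\star})\) to be a standard box such as \([-1,1]\)—disjoint from \((0,2^{-1000}]\)—the incumbent \(\vec{x}_2^{\star}\) satisfies \(S_2(\vec{x}_2^{\star})>0\), failing the second literal by a huge ULP margin. Third, pick the S3 step bound \(\vec{N}\) with \(\|\vec{N}\|_{\infty}\) strictly smaller than the ULP distance from \(\vec{x}_2^{\star}\) to \((0,2^{-1000}]\); then every lattice point \(\mathrm{nULP}(\vec{n},\vec{x}_2^{\star})\) with \(\vec{n}\in[-\vec{N},\vec{N}]\) still violates the inequality, so \(S_3(\vec{n})>0\). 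Fourth, Step 5 of the procedure therefore returns \textbf{unsat-guess}; if the time budget expires earlier inside S2 or S3, Step 4 returns \textbf{timeout}, which is equally permitted by the proposition.

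The main obstacle is the black-box nature of \textsc{Global\_Min}: its internal strategy is unspecified, so the argument must survive \emph{any} admissible implementation. I would handle this by a principle of \emph{separation of supports}—any finite-budget minimizer has finite evaluation support and any bounded lattice scan has finite reach, whereas \(\mathcal{M}_{\mathcal{C}^{\star}}\) can be placed in an FP region far from both. A secondary subtlety is that \textsc{MultiStartBox} might inspect the constants of \(\mathcal{C}^{\star}\) and adapt its box accordingly; if so, I would harden the example by encoding the threshold behind a nonlinear expression (e.g.\ an atom \(x\cdot 2^{1000}\le 1\) written with opaque intermediate products) so that satisfiability and the model's location are preserved while no syntactic heuristic can read off the target value.
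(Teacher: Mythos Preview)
The paper offers only a two-line proof sketch: numeric minimization in S1/S2 may converge to non-global minima, and S3 scans only a bounded lattice neighborhood, so with finite time neither is guaranteed to reach a zero even when one exists. No explicit witness is constructed. Your attempt to exhibit one is more ambitious, but the particular construction has gaps.

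First, a factual slip: $[-1,1]$ is \emph{not} disjoint from $(0,2^{-1000}]$; the latter is a subset of the former. Presumably you mean that finitely many samples from $[-1,1]$ will miss that tiny subinterval, but that is a different (probabilistic) claim and, crucially, says nothing about where the optimizer subsequently \emph{moves}.

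Second, and more seriously, the ``huge ULP margin'' conclusion does not follow for your $\mathcal{C}^\star$. There are no linear equalities, so
\[
S_1(x)\;=\;\max\{0,-x\}^2+\max\{0,\,x-2^{-1000}\}^2,
\]
which in FP64 evaluates essentially to $x^2$ outside the subnormal window (the subtraction $x-2^{-1000}$ rounds to $x$ for any non-tiny $x$). Any nontrivial $\mathrm{budget}_1$ therefore drives $x_1^\star$ toward $0$. From $x=0$ the second literal $x\le 2^{-1000}$ is already satisfied and the first literal $x>0$ is violated by exactly \emph{one} ULP, so $S_2(0)=1$ and the nearest model is a single lattice step away. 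Hence $\vec{x}_2^\star$ can easily sit one ULP from a model, and S3 with $\vec N$ as small as $1$ finds it. You never constrain $\mathrm{budget}_1$, so nothing in your argument prevents this trajectory. The repair is either to force \emph{all} stage budgets (including $\mathrm{budget}_1$) to be effectively zero so the optimizer cannot leave its random start---at which point the argument collapses to the paper's sketch---or to engineer a formula whose S1/S2 landscape has a genuine non-global basin that is far in ULP from every model, which you have not done. Also note that $\vec N$ is chosen by the procedure, not supplied as an input; only $\mathrm{budget}_3$ is under your control.
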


\begin{proof}[Proof sketch]
Numeric minimization in S1/S2 may converge to non-global minima; S3 searches only a bounded set of integer step vectors. With finite time, neither guarantees reaching a zero even when one exists.
\end{proof}

\paragraph{Discussion.}
Although incomplete, the design provides strong practical behavior: (1) S1's projection yields predictable progress on linear equalities, stabilizing the search on large instances; (2) S2 and S3 are IEEE-aligned representing functions, ensuring no false SAT and making last-bit gaps explicit; and (3) S3's discrete refinement over the FP lattice snaps near-misses to exact models without brute force. These properties explain the accuracy and scalability observed in our evaluation.

\section{Implementation Details}
\label{sec:implementation}

\emph{Projection for $S_1$.} We assemble $A$ and $\vec{b}$ from the FP-linear equalities detected in $\mathcal{C}$ and compute the projection using $A A^{\top}$ and a rank-agnostic Moore--Penrose pseudoinverse. When $A A^{\top}$ is numerically singular or when no linear equalities exist, we gracefully skip the projection term and keep the squared penalties for the remaining atoms; the objective remains well-defined.

\emph{Clause aggregation.} Inside each disjunction we multiply per-literal distances, and across clauses we sum. This “product-in-$\lor$ / sum-over-clauses” scheme is used consistently in both $S_1$ (with squared distances) and $S_2$/$S_3$ (with squared ULP distances), matching the representing-function contracts from Section~\ref{sec:theory}.

\emph{Global minimization.} \textsc{Global\_Min} is instantiated by a multi-start strategy with a derivative-free local solver; we terminate early on zero objectives and parallelize starts across cores. Stage budgets $\mathrm{budget}_1,\mathrm{budget}_2,\mathrm{budget}_3$ are configurable, with $S_1$ typically receiving the largest share to shape a good basin, and $S_3$ using a small, bounded neighborhood.

\emph{ULP distances and mixed precision.} We implement per-literal ULP distances in IEEE~754, handling strict inequalities with the standard $\pm 1$ offset and dispatching to FP32/FP64 variants per variable. Mixed-precision atoms are supported by per-operand rules consistent with Section~\ref{sec:theory}.

\emph{Constant-time nULP stepping.} The primitive $\mathrm{nULP}(\vec{n},\vec{x})$ is $O(1)$ per coordinate: we reinterpret each FP value as a monotone signed integer index over the FP lattice, add $n_j$ as a \emph{single} integer operation, and bit-cast back to FP, clamping to the nearest finite value when necessary. Subnormals and signed zeros are preserved by construction.

\emph{Numerical hygiene.} We exclude NaN/$\pm\infty$ from the search space, precompute per-variable sorts (FP32/FP64) to dispatch kernels efficiently, and cache matrix factorizations for the projection term when $A$ is unchanged across restarts.

\noindent
Only $S_2$ and $S_3$ can return \textbf{sat}; $S_1$ is used solely for fast initialization. This mirrors the theory in Section~\ref{sec:theory}: $S_1$ provides partial monotone descent on the linear part; $S_2$ and $S_3$ are representing functions (zero if and only if $\vec{x}\models\mathcal{C}$); and the overall procedure is sound (no false \textbf{sat}), incomplete by necessity, and empirically accurate and scalable.

Appendix \ref{sec:pseudocode} describes the implemented algorithms. Our  supplementary artifact (anonymized) packages the StageSat tool, a usage guide, benchmarks, and experiment scripts, allowing reviewers to reproduce the tables and figures reported in this paper.

\section{Experiments}
\label{sec:Experiments}
\subsection{Executive Summary}

StageSAT demonstrates robust performance and reliability on widely-used floating-point satisfiability benchmarks:
\begin{itemize}
    \item 

\emph{Broad coverage}: StageSAT produced results on 45 of 49 formulas in the hardest FP benchmark suite (versus 39 by the best solver), and handled more benchmarks overall than any competing solver thanks to a much lower timeout rate.

    \item 
\emph{Near-perfect accuracy in SAT detection}: StageSAT found solutions for all but two satisfiable formulas in our benchmark suite (~99.4\% recall) and never reported a satisfiable result for any unsatisfiable formula (0\% false SAT). These outcomes perfectly aligned with the complete solver’s UNSAT verdicts, with no incorrect models produced in any case.

    \item 
\emph{Significant time improvement}: StageSAT delivered these results much faster than the complete solver, achieving an overall solving speedup of 5--10$\times$ faster on average. It also encountered far fewer timeouts, solving nearly all instances within the time limit in our benchmarks.

    \item 
\emph{Essential pipeline}: Removing any one of StageSAT’s three optimization stages significantly reduced the number of instances it could handle (or led to missed solutions), confirming that every component of its design is critical for effectiveness.

\end{itemize}


\subsection{Experimental Setup}

We evaluate \emph{StageSAT} on five suites spanning a broad range of floating‑point SMT problems: \emph{MathSAT‑Small} (130 files, $\leq$10\,KB), \emph{MathSAT‑Middle} (35 files, 11–20\,KB), \emph{MathSAT‑Large} (49 files, $>$20\,KB), \emph{Grater} (118 files; 5 non‑RNE cases filtered), and \emph{JFS} (111 files).
The Large set serves as the primary stress test; the others provide breadth across sizes and sources. We compare two solver categories: \emph{incomplete} solvers—\emph{XSat}, \emph{goSAT}, \emph{Grater}, and \emph{JFS} (optimization/fuzzing based; cannot prove UNSAT)—and \emph{complete} solvers—\emph{Z3}, \emph{CVC5}, \emph{MathSAT}, and \emph{Bitwuzla} (bit‑precise decision procedures). All experiments ran on an \emph{Intel Core i9‑13900HK} (14 cores), \emph{32\,GB RAM}, \emph{Ubuntu~24.04.3~LTS}, with a \emph{20‑minute} wall‑clock cap per instance (except for a stress test which we use 48 hours and results are reported in Table \ref{tab:rq2-detailed-stagesat-bitwuzla}). Incomplete solvers (including \emph{StageSAT}) were run $5\times$ per benchmark.

Our evaluation uses the following metrics. \emph{SAT coverage (a.k.a.\ SAT recall)} is the fraction of ground‑truth SAT instances for which the solver returns \texttt{sat} with a validated model. We obtain all ground-truth SAT/UNSAT labels from complete solvers: whenever at least one complete solver finishes within the 48-hour limit, we adopt its result (and in all such cases, completed solvers agree). For one particularly hard benchmark, every complete solver hit the 48-hour timeout, so we reran the fastest complete solver, Bitwuzla, with a longer timeout and used its final result as ground truth.
\[
\mathrm{SAT\text{-}Recall}
\;=\;
\frac{\#\{\text{reported SAT } \wedge \, \text{ground-truth=SAT}\}}
     {\#\{\text{ground-truth=SAT}\}} \, .
\]

\emph{Timeout rate} is the share of all benchmarks with no verdict within 20 minutes:
\[
\mathrm{Timeout\text{-}rate}
=\frac{\#\{\text{timeout}\}}{\#\{\text{all benchmarks}\}}.
\]
\emph{Average (mean) time} is the arithmetic mean per suite with timeouts counted at the cap (to reflect robustness under the budget). \emph{Median time} is reported alongside the mean to reflect typical behavior and reduce heavy‑tail effects.

We do not use \textbf{SAT precision} (the fraction of reported \texttt{sat} verdicts that are truly SAT) for comparison. In our experiments, it is effectively 100\% for all solvers—rare exceptions occurred only when \emph{Grater} accepted sub-tolerance objectives—making it non-discriminative. Moreover, precision can be inflated by reporting \texttt{sat} only on easy cases, while SAT coverage can be inflated by labeling everything \texttt{sat}. We therefore emphasize \textbf{SAT coverage, timeout rate, and time statistics}. For \emph{StageSAT}, \texttt{unsat-guess} is a distinct, non-proof outcome that we never count as proved UNSAT or as ``solved''.

\subsection{Quantitative Results}

\begin{table}[htbp]
  \centering
  \caption{Comparison on MathSAT-Large benchmarks: \textsc{StageSAT} vs.\ incomplete solvers. Timeout as 20 mins.}
  \label{tab:rq1-detailed}
  \resizebox{\textwidth}{!}{%
  \small
  \setlength{\tabcolsep}{3pt}
  \begin{tabular}{@{}l r r | cc | cc | cc | cc | cc@{}}
    \toprule
    \multirow{2}{*}{Benchmark} & \multirow{2}{*}{Size(byte)} & \multirow{2}{*}{\#Vars} &
    \multicolumn{2}{c|}{\textsc{XSat}} &
    \multicolumn{2}{c|}{\textsc{goSAT}} &
    \multicolumn{2}{c|}{\textsc{Grater}} &
    \multicolumn{2}{c|}{\textsc{JFS}} &
    \multicolumn{2}{c}{\textsc{StageSAT}} \\
    \cmidrule(lr){4-5} \cmidrule(lr){6-7} \cmidrule(lr){8-9} \cmidrule(lr){10-11} \cmidrule(l){12-13}
    & & & Verdict & Time(s) & Verdict & Time(s) & Verdict & Time(s) & Verdict & Time(s) & Verdict & Time(s) \\
    \midrule
    {\footnotesize\texttt{sqrt.c.10}}                      & 21701 & 26  & unsat & 7.81 & unknown & 0.72 & timeout & >1200 & timeout & >1200 & sat & 102.48 \\
    {\footnotesize\texttt{test\_v5\_r15\_vr5\_c1\_s8246}}  & 21791 & 5   & unsat & 1.62 & unknown & 0.02 & timeout & >1200 & timeout & >1200 & unsat-guess & 10.76 \\
    {\footnotesize\texttt{test\_v5\_r15\_vr1\_c1\_s26845}} & 21811 & 5   & unsat & 1.24 & unknown & 0.02 & timeout & >1200 & timeout & >1200 & unsat-guess & 7.00 \\
    {\footnotesize\texttt{test\_v5\_r15\_vr10\_c1\_s25268}}& 21818 & 5   & unsat & 1.54 & unknown & 0.02 & timeout & >1200 & timeout & >1200 & unsat-guess & 8.87 \\
    {\footnotesize\texttt{test\_v5\_r15\_vr5\_c1\_s26657}} & 22070 & 5   & unsat & 1.48 & unknown & 0.02 & timeout & >1200 & timeout & >1200 & unsat-guess & 6.70 \\
    {\footnotesize\texttt{test\_v5\_r15\_vr1\_c1\_s32559}} & 22072 & 5   & unsat & 1.58 & unknown & 0.02 & timeout & >1200 & timeout & >1200 & unsat-guess & 5.78 \\
    {\footnotesize\texttt{test\_v5\_r15\_vr1\_c1\_s8236}}  & 22072 & 5   & unsat & 1.22 & unknown & 0.02 & timeout & >1200 & timeout & >1200 & unsat-guess & 5.88 \\
    {\footnotesize\texttt{test\_v5\_r15\_vr5\_c1\_s23844}} & 22072 & 5   & unsat & 1.66 & unknown & 0.02 & timeout & >1200 & timeout & >1200 & unsat-guess & 9.21 \\
    {\footnotesize\texttt{test\_v5\_r15\_vr10\_c1\_s14516}}& 22252 & 5   & unsat & 1.40 & unknown & 0.02 & timeout & >1200 & timeout & >1200 & unsat-guess & 10.56 \\
    {\footnotesize\texttt{qurt.c.5}}                       & 23169 & 30  & unsat & 9.44 & unknown & 0.83 & sat & 1177.90 & timeout & >1200 & unsat-guess & 156.14 \\
    {\footnotesize\texttt{test\_v7\_r12\_vr5\_c1\_s29826}} & 23736 & 7   & sat & 0.18 & sat & 0.02 & sat & 0.07 & sat & 3.25 & sat & 0.06 \\
    {\footnotesize\texttt{test\_v7\_r12\_vr10\_c1\_s15994}}& 23828 & 7   & sat & 0.14 & sat & 0.02 & sat & 0.21 & sat & 16.94 & sat & 0.06 \\
    {\footnotesize\texttt{test\_v7\_r12\_vr10\_c1\_s30410}}& 24070 & 7   & sat & 1.40 & sat & 0.02 & sat & 0.89 & timeout & >1200 & sat & 0.09 \\
    {\footnotesize\texttt{test\_v7\_r12\_vr5\_c1\_s14336}} & 24250 & 7   & sat & 0.14 & sat & 0.03 & sat & 0.07 & sat & 0.38 & sat & 0.05 \\
    {\footnotesize\texttt{test\_v7\_r12\_vr5\_c1\_s8938}}  & 24251 & 7   & sat & 0.14 & sat & 0.02 & sat & 0.07 & sat & 0.14 & sat & 0.05 \\
    {\footnotesize\texttt{test\_v7\_r12\_vr1\_c1\_s10576}} & 24274 & 7   & unsat & 3.04 & unknown & 0.03 & timeout & >1200 & timeout & >1200 & unsat-guess & 14.56 \\
    {\footnotesize\texttt{test\_v7\_r12\_vr1\_c1\_s22787}} & 24345 & 7   & unsat & 2.24 & unknown & 0.04 & timeout & >1200 & timeout & >1200 & unsat-guess & 11.65 \\
    {\footnotesize\texttt{test\_v7\_r12\_vr10\_c1\_s18160}}& 24437 & 7   & unsat & 2.40 & unknown & 0.04 & timeout & >1200 & timeout & >1200 & unsat-guess & 21.18 \\
    {\footnotesize\texttt{test\_v7\_r12\_vr1\_c1\_s703}}   & 24441 & 7   & unsat & 2.76 & unknown & 0.03 & timeout & >1200 & timeout & >1200 & unsat-guess & 16.54 \\
    {\footnotesize\texttt{sin2.c.15}}                      & 25235 & 52  & sat & 25.81 & unknown & 1.78 & sat & 13.88 & timeout & >1200 & sat & 3.92 \\
    {\footnotesize\texttt{gaussian.c.25}}                  & 29883 & 79  & sat & 0.72 & unknown & 2.52 & sat & 9.12 & timeout & >1200 & sat & 3.38 \\
    {\footnotesize\texttt{sqrt.c.15}}                      & 32192 & 36  & unsat & 13.10 & unknown & 1.06 & timeout & >1200 & timeout & >1200 & sat & 101.76 \\
    {\footnotesize\texttt{test\_v7\_r17\_vr5\_c1\_s2807}}  & 32711 & 7   & unsat & 2.20 & unknown & 0.04 & timeout & >1200 & timeout & >1200 & unsat-guess & 15.21 \\
    {\footnotesize\texttt{test\_v7\_r17\_vr1\_c1\_s30331}} & 32876 & 7   & unsat & 2.58 & unknown & 0.05 & timeout & >1200 & timeout & >1200 & unsat-guess & 14.96 \\
    {\footnotesize\texttt{test\_v7\_r17\_vr5\_c1\_s25451}} & 32964 & 7   & unsat & 2.30 & unknown & 0.04 & timeout & >1200 & timeout & >1200 & unsat-guess & 11.58 \\
    {\footnotesize\texttt{sin2.c.20}}                      & 33016 & 67  & unsat & 41.81 & unknown & 1.74 & sat & 33.89 & timeout & >1200 & sat & 169.92 \\
    {\footnotesize\texttt{test\_v7\_r17\_vr10\_c1\_s8773}} & 33151 & 7   & sat & 0.74 & sat & 0.03 & sat & 0.35 & timeout & >1200 & sat & 0.06 \\
    {\footnotesize\texttt{test\_v7\_r17\_vr5\_c1\_s4772}}  & 33222 & 7   & unsat & 2.30 & unknown & 0.05 & timeout & >1200 & timeout & >1200 & unsat-guess & 35.20 \\
    {\footnotesize\texttt{test\_v7\_r17\_vr1\_c1\_s23882}} & 33226 & 7   & sat & 0.14 & sat & 0.03 & sat & 0.29 & timeout & >1200 & sat & 0.05 \\
    {\footnotesize\texttt{test\_v7\_r17\_vr1\_c1\_s24331}} & 33226 & 7   & unsat & 3.12 & unknown & 0.04 & timeout & >1200 & timeout & >1200 & unsat-guess & 11.63 \\
    {\footnotesize\texttt{test\_v7\_r17\_vr10\_c1\_s3680}} & 33335 & 7   & unsat & 3.06 & unknown & 0.05 & timeout & >1200 & timeout & >1200 & unsat-guess & 13.26 \\
    {\footnotesize\texttt{test\_v7\_r17\_vr10\_c1\_s18654}}& 33410 & 7   & sat & 0.16 & sat & 0.03 & sat & 5.73 & timeout & >1200 & sat & 0.05 \\
    {\footnotesize\texttt{sin.c.25}}                       & 40536 & 81  & unsat & 76.20 & unknown & 2.31 & sat & 98.30 & timeout & >1200 & sat & 294.71 \\
    {\footnotesize\texttt{sin2.c.25}}                      & 40747 & 82  & unsat & 61.49 & unknown & 2.84 & sat & 93.33 & timeout & >1200 & sat & 8.62 \\
    {\footnotesize\texttt{sqrt.c.20}}                      & 46804 & 63  & unsat & 45.91 & unknown & 1.73 & timeout & >1200 & timeout & >1200 & sat & 498.60 \\
    {\footnotesize\texttt{sqrt.c.25}}                      & 46804 & 63  & unsat & 46.33 & unknown & 1.82 & timeout & >1200 & timeout & >1200 & sat & 581.95 \\
    {\footnotesize\texttt{qurt.c.10}}                      & 47946 & 60  & unsat & 22.89 & unknown & 1.28 & timeout & >1200 & timeout & >1200 & unsat-guess & 393.02 \\
    {\footnotesize\texttt{qurt.c.15}}                      & 73125 & 90  & unsat & 53.40 & unknown & 2.95 & timeout & >1200 & timeout & >1200 & unsat-guess & 728.38 \\
    {\footnotesize\texttt{gaussian.c.75}}                  & 89686 & 229 & sat & 20.05 & unknown & 13.31 & sat & 542.51 & timeout & >1200 & sat & 36.45 \\
    {\footnotesize\texttt{qurt.c.20}}                      & 93126 & 114 & unsat & 68.26 & unknown & 3.96 & timeout & >1200 & timeout & >1200 & timeout & >1200 \\
    {\footnotesize\texttt{qurt.c.25}}                      & 93126 & 114 & unsat & 76.92 & unknown & 4.31 & timeout & >1200 & timeout & >1200 & timeout & >1200 \\
    {\footnotesize\texttt{sin2.c.75}}                      & 119790 & 231 & unsat & 242.70 & unknown & 11.70 & timeout & >1200 & timeout & >1200 & sat & 363.37 \\
    {\footnotesize\texttt{sin.c.75}}                       & 119794 & 231 & unsat & 214.49 & unknown & 12.32 & timeout & >1200 & timeout & >1200 & sat & 660.70 \\
    {\footnotesize\texttt{gaussian.c.125}}                 & 150792 & 379 & sat & 13.20 & unknown & 40.24 & sat & 89.02 & timeout & >1200 & sat & 90.62 \\
    {\footnotesize\texttt{sin.c.125}}                      & 200503 & 381 & unsat & 1000.64 & unknown & 34.38 & error & -- & timeout & >1200 & sat & 379.39 \\
    {\footnotesize\texttt{sin2.c.125}}                     & 200503 & 381 & unsat & 984.14 & unknown & 33.26 & error & -- & timeout & >1200 & sat & 313.85 \\
    {\footnotesize\texttt{gaussian.c.175}}                 & 210711 & 529 & unsat & 1101.04 & unknown & 80.37 & timeout & >1200 & timeout & >1200 & sat & 273.68 \\
    {\footnotesize\texttt{sin2.c.175}}                     & 280962 & 531 & timeout & >1200 & unknown & 76.20 & error & -- & timeout & >1200 & timeout & >1200 \\
    {\footnotesize\texttt{sin.c.175}}                      & 280984 & 531 & timeout & >1200 & unknown & 72.63 & error & -- & timeout & >1200 & timeout & >1200 \\
    \midrule
    \multicolumn{3}{l|}{\textbf{SAT Coverage}} & 
    \multicolumn{2}{c|}{\textbf{46.2\%}} & 
    \multicolumn{2}{c|}{\textbf{30.8\%}} & 
    \multicolumn{2}{c|}{\textbf{57.7\%}} & 
    \multicolumn{2}{c|}{\textbf{15.4\%}} & 
    \multicolumn{2}{c}{\textbf{92.3\%}} \\
    \multicolumn{3}{l|}{\textbf{Timeout Rate}} & 
    \multicolumn{2}{c|}{\textbf{4.1\%}} & 
    \multicolumn{2}{c|}{\textbf{0.0\%}} & 
    \multicolumn{2}{c|}{\textbf{51.0\%}} & 
    \multicolumn{2}{c|}{\textbf{91.8\%}} & 
    \multicolumn{2}{c}{\textbf{8.2\%}} \\
    \multicolumn{3}{l|}{\textbf{Average Time(s)}} & 
    \multicolumn{2}{c|}{\textbf{134.02}} & 
    \multicolumn{2}{c|}{\textbf{8.27}} & 
    \multicolumn{2}{c|}{\textbf{819.24}} & 
    \multicolumn{2}{c|}{\textbf{1102.46}} & 
    \multicolumn{2}{c}{\textbf{208.00}} \\
    \multicolumn{3}{l|}{\textbf{Median Time(s)}} & 
    \multicolumn{2}{c|}{\textbf{3.04}} & 
    \multicolumn{2}{c|}{\textbf{0.05}} & 
    \multicolumn{2}{c|}{\textbf{>1200}} & 
    \multicolumn{2}{c|}{\textbf{>1200}} & 
    \multicolumn{2}{c}{\textbf{14.96}} \\
    \bottomrule
  \end{tabular}
  }
\end{table}

We  organize the experimental findings by four research questions (RQ1-4).

\myheading{RQ1: StageSAT vs. Incomplete Solvers (MathSAT-Large)}\emph{How does StageSAT perform relative to prior optimization-based and heuristic (incomplete) solvers on large FP benchmarks?}
We compare StageSAT with XSat, goSAT, Grater, and JFS on the 49 MathSAT-Large formulas. The results show that StageSAT attains the highest coverage and accuracy among these incomplete solvers. Table~\ref{tab:rq1-detailed} reports detailed outcomes per benchmark. StageSAT was able to return a result (SAT or unsat-guess) for 45 out of 49 instances, whereas the next-best incomplete solver (XSat) solved significantly fewer within the same time limit. In terms of SAT coverage on the \textsc{MathSAT-Large} suite, \textsc{StageSAT} found satisfying assignments for 24 of the 26 satisfiable benchmarks, outperforming all prior numeric competitors in the number of SAT instances solved. In fact, XSat comes closest in number of instances solved, but there is a crucial difference – soundness. XSat often misclassified difficult satisfiable problems as “UNSAT” when it failed to locate a solution, whereas StageSAT never falsely reported UNSAT in our tests. Specifically, XSat produced 12 incorrect UNSAT answers on benchmarks that actually have solutions (StageSAT managed to find valid models for all 12 of these). All StageSAT models were independently validated with a bit-precise checker (Z3), confirming their correctness\cite{z3py-tutorial}. This highlights that StageSAT’s optimization strategy improves not only the quantity of problems solved but also the precision of the results on challenging formulas.

Other baselines fared less well on MathSAT-Large. Grater, the state-of-the-art optimization-based solver, timed out on most large benchmarks that StageSAT solved, yielding a lower overall solve count. Moreover, Grater’s use of a fixed tolerance can lead to spurious SAT reports – in one case it reported “SAT” even though the formula was actually unsatisfiable (both Z3 and CVC5 proved it UNSAT). StageSAT avoids this pitfall by requiring a strict zero objective for SAT, and accordingly never reports SAT for an unsatisfiable instance. The heuristic fuzzer JFS and the global optimizer goSAT solved the fewest problems in this suite, struggling with the complex constraints (many of their runs resulted in timeouts or unknown results). Overall, StageSAT offers the best balance of coverage and reliability: it solves significantly more large benchmarks than JFS or goSAT, more than Grater, and matches XSat’s coverage \emph{without any} of XSat’s misclassification errors\cite{DBLP:conf/cav/FuS16}. In practical terms, StageSAT was able to solve the vast majority of these large-instance challenges (often on every trial run), whereas the other incomplete solvers were either inconsistent or outright failed on many instances. These results confirm that StageSAT’s novel staged optimization approach yields a substantial improvement in both solver effectiveness (higher solve rate) and solver soundness (no incorrect answers) for difficult floating-point problems.

\myheading{RQ2: StageSAT vs. Complete Solvers (MathSAT-Large)} \emph{How does StageSAT compare to modern complete SMT solvers on the same large benchmarks?} In this analysis we evaluate StageSAT against Bitwuzla, CVC5, Z3, and MathSAT5 on the 49 MathSAT-Large formulas, as shown in Table \ref{tab:rq2-detailed}. A key question is whether StageSAT’s heuristic approach can achieve similar coverage to these complete (bit-precise) solvers, which can in principle solve or refute any instance given unlimited time. We found that StageSAT’s coverage on large benchmarks is competitive with the best of the complete solvers. Within the 20-minute limit, StageSAT produced results for 45 out of 49 instances, slightly more than even the top-performing complete solver (Bitwuzla solved 39 in time). CVC5 solved 36, while Z3 and MathSAT5 handled fewer cases within the timeout. In other words, StageSAT’s incomplete strategy allowed it to tackle about 92\% of these hard instances, essentially matching state-of-the-art coverage on this large-scale set. This is notable because complete solvers sometimes exhaust the time on very complex formulas, whereas StageSAT either finds a model or converges on an unsatisfiability guess more quickly in most cases.

\begin{table}[htbp]
  \centering
  \caption{Comparison on MathSAT-Large benchmarks: \textsc{StageSAT} vs.\ complete solvers. Timeout as 20 mins.}
  \label{tab:rq2-detailed}
  \resizebox{\textwidth}{!}{%
  \small
  \setlength{\tabcolsep}{3pt}
  \begin{tabular}{@{}l r r | cc | cc | cc | cc | cc@{}}
    \toprule
    \multirow{2}{*}{Benchmark} & \multirow{2}{*}{Size(byte)} & \multirow{2}{*}{\#Vars} &
    \multicolumn{2}{c|}{\textsc{cvc5}} &
    \multicolumn{2}{c|}{\textsc{Bitwuzla}} &
    \multicolumn{2}{c|}{\textsc{Z3}} &
    \multicolumn{2}{c|}{\textsc{MathSAT}} &
    \multicolumn{2}{c}{\textsc{StageSAT}} \\
    \cmidrule(lr){4-5} \cmidrule(lr){6-7} \cmidrule(lr){8-9} \cmidrule(lr){10-11} \cmidrule(l){12-13}
    & & & Verdict & Time(s) & Verdict & Time(s) & Verdict & Time(s) & Verdict & Time(s) & Verdict & Time(s) \\
    \midrule
    {\footnotesize\texttt{sqrt.c.10}}                      & 21701 & 26  & sat     & 1.01   & sat     & 6.15   & sat     & 492.62 & sat     & 28.08  & sat    & 102.48 \\
    {\footnotesize\texttt{test\_v5\_r15\_vr5\_c1\_s8246}}  & 21791 & 5   & unsat   & 269.56 & unsat   & 95.73  & timeout & >1200  & unsat   & 294.65 & unsat-guess  & 10.76  \\
    {\footnotesize\texttt{test\_v5\_r15\_vr1\_c1\_s26845}} & 21811 & 5   & unsat   & 75.80  & unsat   & 47.68  & unsat   & 737.95 & unsat   & 131.37 & unsat-guess  & 7.00   \\
    {\footnotesize\texttt{test\_v5\_r15\_vr10\_c1\_s25268}}& 21818 & 5   & unsat   & 216.89 & unsat   & 156.56 & timeout & >1200  & timeout & >1200  & unsat-guess  & 8.87   \\
    {\footnotesize\texttt{test\_v5\_r15\_vr5\_c1\_s26657}} & 22070 & 5   & unsat   & 160.27 & unsat   & 63.23  & timeout & >1200  & unsat   & 177.60 & unsat-guess  & 6.70   \\
    {\footnotesize\texttt{test\_v5\_r15\_vr1\_c1\_s32559}} & 22072 & 5   & unsat   & 43.48  & unsat   & 33.09  & unsat   & 139.16 & unsat   & 24.20  & unsat-guess  & 5.78   \\
    {\footnotesize\texttt{test\_v5\_r15\_vr1\_c1\_s8236}}  & 22072 & 5   & unsat   & 49.45  & unsat   & 25.43  & unsat   & 369.09 & unsat   & 12.66  & unsat-guess  & 5.88   \\
    {\footnotesize\texttt{test\_v5\_r15\_vr5\_c1\_s23844}} & 22072 & 5   & unsat   & 266.86 & unsat   & 95.28  & timeout & >1200  & unsat   & 231.11 & unsat-guess  & 9.21   \\
    {\footnotesize\texttt{test\_v5\_r15\_vr10\_c1\_s14516}}& 22252 & 5   & unsat   & 530.11 & unsat   & 202.41 & timeout & >1200  & timeout & >1200  & unsat-guess  & 10.56  \\
    {\footnotesize\texttt{qurt.c.5}}                       & 23169 & 30  & unsat   & 6.59   & unsat   & 0.22   & unsat   & 13.44  & unsat   & 6.53   & unsat-guess  & 156.14 \\
    {\footnotesize\texttt{test\_v7\_r12\_vr5\_c1\_s29826}} & 23736 & 7   & sat     & 129.16 & sat     & 163.13 & sat     & 425.77 & sat     & 65.19  & sat    & 0.06   \\
    {\footnotesize\texttt{test\_v7\_r12\_vr10\_c1\_s15994}}& 23828 & 7   & sat     & 176.39 & sat     & 98.18  & sat     & 509.78 & sat     & 130.89 & sat    & 0.06   \\
    {\footnotesize\texttt{test\_v7\_r12\_vr10\_c1\_s30410}}& 24070 & 7   & timeout & >1200  & timeout & >1200  & timeout & >1200  & timeout & >1200  & sat    & 0.09   \\
    {\footnotesize\texttt{test\_v7\_r12\_vr5\_c1\_s14336}} & 24250 & 7   & sat     & 22.82  & sat     & 115.91 & sat     & 284.32 & sat     & 84.11  & sat    & 0.05   \\
    {\footnotesize\texttt{test\_v7\_r12\_vr5\_c1\_s8938}}  & 24251 & 7   & sat     & 19.00  & sat     & 37.72  & sat     & 105.88 & sat     & 30.66  & sat    & 0.05   \\
    {\footnotesize\texttt{test\_v7\_r12\_vr1\_c1\_s10576}} & 24274 & 7   & unsat   & 322.33 & unsat   & 170.45 & timeout & >1200  & unsat   & 206.12 & unsat-guess  & 14.56  \\
    {\footnotesize\texttt{test\_v7\_r12\_vr1\_c1\_s22787}} & 24345 & 7   & unsat   & 597.04 & unsat   & 217.96 & timeout & >1200  & unsat   & 581.35 & unsat-guess  & 11.65  \\
    {\footnotesize\texttt{test\_v7\_r12\_vr10\_c1\_s18160}}& 24437 & 7   & timeout & >1200  & timeout & >1200  & timeout & >1200  & timeout & >1200  & unsat-guess  & 21.18  \\
    {\footnotesize\texttt{test\_v7\_r12\_vr1\_c1\_s703}}   & 24441 & 7   & unsat   & 639.55 & unsat   & 234.56 & timeout & >1200  & timeout & >1200  & unsat-guess  & 16.54  \\
    {\footnotesize\texttt{sin2.c.15}}                      & 25235 & 52  & sat     & 29.95  & sat     & 159.21 & timeout & >1200  & timeout & >1200  & sat    & 3.92   \\
    {\footnotesize\texttt{gaussian.c.25}}                  & 29883 & 79  & sat     & 3.54   & sat     & 2.47   & sat     & 642.11 & sat     & 11.65  & sat    & 3.38   \\
    {\footnotesize\texttt{sqrt.c.15}}                      & 32192 & 36  & sat     & 8.95   & sat     & 1.29   & timeout & >1200  & sat     & 27.84  & sat    & 101.76 \\
    {\footnotesize\texttt{test\_v7\_r17\_vr5\_c1\_s2807}}  & 32711 & 7   & timeout & >1200  & timeout & >1200  & timeout & >1200  & timeout & >1200  & unsat-guess  & 15.21  \\
    {\footnotesize\texttt{test\_v7\_r17\_vr1\_c1\_s30331}} & 32876 & 7   & unsat   & 205.82 & unsat   & 227.17 & timeout & >1200  & unsat   & 1140.11 & unsat-guess  & 14.96  \\
    {\footnotesize\texttt{test\_v7\_r17\_vr5\_c1\_s25451}} & 32964 & 7   & timeout & >1200  & unsat   & 888.67 & timeout & >1200  & timeout & >1200  & unsat-guess  & 11.58  \\
    {\footnotesize\texttt{sin2.c.20}}                      & 33016 & 67  & sat     & 560.39 & sat     & 693.96 & timeout & >1200  & timeout & >1200  & sat    & 169.92 \\
    {\footnotesize\texttt{test\_v7\_r17\_vr10\_c1\_s8773}} & 33151 & 7   & sat     & 794.03 & sat     & 530.94 & timeout & >1200  & timeout & >1200  & sat    & 0.06   \\
    {\footnotesize\texttt{test\_v7\_r17\_vr5\_c1\_s4772}}  & 33222 & 7   & timeout & >1200  & timeout & >1200  & timeout & >1200  & timeout & >1200  & unsat-guess  & 35.20  \\
    {\footnotesize\texttt{test\_v7\_r17\_vr1\_c1\_s23882}} & 33226 & 7   & sat     & 273.29 & sat     & 340.07 & sat     & 1162.59& sat     & 524.15 & sat    & 0.05   \\
    {\footnotesize\texttt{test\_v7\_r17\_vr1\_c1\_s24331}} & 33226 & 7   & timeout & >1200  & unsat   & 787.68 & timeout & >1200  & timeout & >1200  & unsat-guess  & 11.63  \\
    {\footnotesize\texttt{test\_v7\_r17\_vr10\_c1\_s3680}} & 33335 & 7   & timeout & >1200  & unsat   & 1188.71& timeout & >1200  & timeout & >1200  & unsat-guess  & 13.26  \\
    {\footnotesize\texttt{test\_v7\_r17\_vr10\_c1\_s18654}}& 33410 & 7   & sat     & 1126.63& sat     & 588.61 & timeout & >1200  & timeout & >1200  & sat    & 0.05   \\
    {\footnotesize\texttt{sin.c.25}}                       & 40536 & 81  & sat     & 802.21 & sat     & 874.55 & timeout & >1200  & timeout & >1200  & sat    & 294.71 \\
    {\footnotesize\texttt{sin2.c.25}}                      & 40747 & 82  & sat     & 713.26 & sat     & 902.61 & timeout & >1200  & timeout & >1200  & sat    & 8.62   \\
    {\footnotesize\texttt{sqrt.c.20}}                      & 46804 & 63  & sat     & 0.96   & sat     & 5.96   & sat     & 721.16 & sat     & 18.39  & sat    & 498.60 \\
    {\footnotesize\texttt{sqrt.c.25}}                      & 46804 & 63  & sat     & 0.96   & sat     & 5.95   & sat     & 717.69 & sat     & 18.42  & sat    & 581.95 \\
    {\footnotesize\texttt{qurt.c.10}}                      & 47946 & 60  & unsat   & 6.59   & unsat   & 0.21   & unsat   & 23.86  & unsat   & 6.62   & unsat-guess  & 393.02 \\
    {\footnotesize\texttt{qurt.c.15}}                      & 73125 & 90  & unsat   & 0.27   & unsat   & 0.02   & unsat   & 1.93   & unsat   & 6.61   & unsat-guess  & 728.38 \\
    {\footnotesize\texttt{gaussian.c.75}}                  & 89686 & 229 & sat     & 13.46  & sat     & 6.57   & sat     & 429.71 & sat     & 56.89  & sat    & 36.45  \\
    {\footnotesize\texttt{qurt.c.20}}                      & 93126 & 114 & unsat   & 0.30   & unsat   & 0.01   & timeout & >1200  & unsat   & 7.25   & timeout& >1200     \\
    {\footnotesize\texttt{qurt.c.25}}                      & 93126 & 114 & unsat   & 0.32   & unsat   & 0.01   & timeout & >1200  & unsat   & 7.21   & timeout& >1200     \\
    {\footnotesize\texttt{sin2.c.75}}                      & 119790& 231 & timeout & >1200  & timeout & >1200  & error   & --     & timeout & >1200  & sat    & 363.37 \\
    {\footnotesize\texttt{sin.c.75}}                       & 119794& 231 & timeout & >1200  & timeout & >1200  & error   & --     & timeout & >1200  & sat    & 660.70 \\
    {\footnotesize\texttt{gaussian.c.125}}                 & 150792& 379 & sat     & 20.54  & sat     & 36.94  & timeout & >1200  & sat     & 338.55 & sat    & 90.62  \\
    {\footnotesize\texttt{sin.c.125}}                      & 200503& 381 & timeout & >1200  & timeout & >1200  & error   & --     & timeout & >1200  & sat    & 379.39 \\
    {\footnotesize\texttt{sin2.c.125}}                     & 200503& 381 & timeout & >1200  & timeout & >1200  & error   & --     & timeout & >1200  & sat    & 313.85 \\
    {\footnotesize\texttt{gaussian.c.175}}                 & 210711& 529 & sat     & 460.36 & sat     & 745.12 & timeout & >1200  & timeout & >1200  & sat    & 273.68 \\
    {\footnotesize\texttt{sin2.c.175}}                     & 280962& 531 & timeout & >1200  & timeout & >1200  & error   & --     & timeout & >1200  & timeout& >1200     \\
    {\footnotesize\texttt{sin.c.175}}                      & 280984& 531 & timeout & >1200  & timeout & >1200  & error   & --     & timeout & >1200  & timeout& >1200     \\
    \midrule
    \multicolumn{3}{l|}{\textbf{SAT/UNSAT Coverage}} & 
    \multicolumn{2}{c|}{\textbf{73.5\%}} & 
    \multicolumn{2}{c|}{\textbf{79.6\%}} & 
    \multicolumn{2}{c|}{\textbf{32.7\%}} & 
    \multicolumn{2}{c|}{\textbf{53.1\%}} & 
    \multicolumn{2}{c}{\textbf{91.8\%}} \\
    \multicolumn{3}{l|}{\textbf{Timeout Rate}} & 
    \multicolumn{2}{c|}{\textbf{26.5\%}} & 
    \multicolumn{2}{c|}{\textbf{20.4\%}} & 
    \multicolumn{2}{c|}{\textbf{55.1\%}} & 
    \multicolumn{2}{c|}{\textbf{46.9\%}} & 
    \multicolumn{2}{c}{\textbf{8.2\%}} \\
    \multicolumn{3}{l|}{\textbf{Average Time(s)}} & 
    \multicolumn{2}{c|}{\textbf{492.82}} & 
    \multicolumn{2}{c|}{\textbf{443.89}} & 
    \multicolumn{2}{c|}{\textbf{911.09}} & 
    \multicolumn{2}{c|}{\textbf{648.33}} & 
    \multicolumn{2}{c}{\textbf{208.00}} \\
    \multicolumn{3}{l|}{\textbf{Median Time(s)}} & 
    \multicolumn{2}{c|}{\textbf{269.56}} & 
    \multicolumn{2}{c|}{\textbf{170.45}} & 
    \multicolumn{2}{c|}{\textbf{>1200}} & 
    \multicolumn{2}{c|}{\textbf{581.35}} & 
    \multicolumn{2}{c}{\textbf{14.96}} \\
    \bottomrule
  \end{tabular}
  }
\end{table}
\clearpage

\begin{table}[H]
  \centering
  \small
  \setlength{\tabcolsep}{3pt}
  \caption{Comparison on \textsc{MathSAT-Large} benchmarks where \textsc{Bitwuzla} timeout: \textsc{StageSAT} vs.\ \textsc{Bitwuzla}. Timeout as 48 hours.}
  \label{tab:rq2-detailed-stagesat-bitwuzla}
  \begin{tabular}{@{}l r r | cc | cc@{}}
    \toprule
    \multirow{2}{*}{Benchmark} & \multirow{2}{*}{Size(byte)} & \multirow{2}{*}{\#Vars} &
    \multicolumn{2}{c|}{\textsc{StageSAT}} &
    \multicolumn{2}{c}{\textsc{Bitwuzla}} \\
    \cmidrule(lr){4-5} \cmidrule(l){6-7}
    & & & Verdict & Time(s) & Verdict & Time(s) \\
    \midrule
    {\footnotesize\texttt{test\_v7\_r12\_vr10\_c1\_s30410}}& 24070 & 7   & sat    & 0.09   & sat & 3042.79     \\
    {\footnotesize\texttt{test\_v7\_r12\_vr10\_c1\_s18160}}& 24437 & 7   & unsat-guess  & 21.18  & unsat & 7347.42     \\
    {\footnotesize\texttt{test\_v7\_r17\_vr5\_c1\_s2807}}  & 32711 & 7   & unsat-guess  & 15.21  & unsat & 1466.26     \\
    {\footnotesize\texttt{test\_v7\_r17\_vr5\_c1\_s4772}}  & 33222 & 7   & unsat-guess  & 35.20  & unsat & 93950.26     \\
    {\footnotesize\texttt{sin2.c.75}}                      & 119790& 231 & sat    & 363.37 & sat & 4180.57     \\
    {\footnotesize\texttt{sin.c.75}}                       & 119794& 231 & sat    & 660.70 & sat & 9361.38     \\
    {\footnotesize\texttt{sin.c.125}}                      & 200503& 381 & sat    & 379.39 & timeout & >48hours     \\
    {\footnotesize\texttt{sin2.c.125}}                     & 200503& 381 & sat    & 313.85 & sat & 105514.52     \\
    {\footnotesize\texttt{sin2.c.175}}                     & 280962& 531 & timeout & >48hours     & sat & 151645.17     \\
    {\footnotesize\texttt{sin.c.175}}                      & 280984& 531 & timeout & >48hours     & sat & 68898.42     \\
    \bottomrule
  \end{tabular}
\end{table}

In terms of performance, \textsc{StageSAT} also shows advantages on satisfiable instances: it typically finds solutions significantly faster than complete solvers. Many satisfiable \textsc{MathSAT-Large} benchmarks that Bitwuzla or CVC5 only solve near the timeout are solved by \textsc{StageSAT} in a fraction of the time (often $5$–$10\times$ faster, and on the largest satisfiable cases over $10\times$ faster in our tests). We also ran stress tests with \textsc{StageSAT} and Bitwuzla under an extended $48$-hour timeout to evaluate their behavior on the largest benchmarks where Bitwuzla times out, with results summarized in Table~\ref{tab:rq2-detailed-stagesat-bitwuzla}.
For example, on one particularly difficult satisfiable formula (sin.c.125), StageSAT succeeded quickly while Bitwuzla \emph{failed to find any model even with 48 hours of searching}. This underscores StageSAT’s strength in navigating the search space for models efficiently. On the other hand, complete solvers have an edge in proving unsatisfiability. By design, StageSAT cannot provide formal UNSAT proofs – it will output unsat-guess when its optimization process concludes that no better (lower) objective can be found, but this is a heuristic indication. We examined all cases: every benchmark that was proven UNSAT by any complete solver was classified as unsat-guess by StageSAT. Crucially, we observed no discrepancies – StageSAT did not label any instance as satisfiable that a complete solver proved unsatisfiable, and whenever StageSAT gave an unsat-guess, the instance indeed had no solution according to the complete solvers. This empirical agreement suggests that StageSAT’s UNSAT guesses were reliable on MathSAT-Large: they acted as a correct heuristic proxy for true unsatisfiability in all tested cases. However, since these are not formal proofs, we do not count them as proven UNSAT; instead, they demonstrate that StageSAT can recognize unsolvable instances with reasonable confidence. On the four hardest cases where StageSAT timed out (no result), the complete solvers eventually determined two to be satisfiable and two unsatisfiable after extended runs. This highlights a limitation: for the very largest or trickiest satisfiable formulas, StageSAT may also struggle (as it did on two extremely large satisfiable benchmarks that required more than 20 minutes). Conversely, for some tough UNSAT cases, complete solvers’ heavy-duty reasoning succeeded where StageSAT’s heuristic approach did not finish in time.

Despite these few limitations, StageSAT performs well against the complete solvers. It matched or exceeded their solve counts within the standard time limit and delivered solutions faster on the instances it could solve. This indicates StageSAT can serve as a practical complement to traditional SMT solvers. In a usage scenario, one might run StageSAT alongside a complete solver: StageSAT will quickly find a model if one exists for hard satisfiable problems (saving potentially hours of search), while for the few instances that require a proof of unsatisfiability, a complete solver can take over. StageSAT’s ability to handle large benchmarks competitively demonstrates the benefit of its approach – bridging numeric optimization with SMT – in achieving both scalability and empirical correctness on floating-point constraints.

\myheading{RQ3: Ablation Study of StageSAT’s Design}
\emph{
Are all three stages and key design components of StageSAT necessary for its performance?} StageSAT’s solving process comprises three sequential stages (S1, S2, S3) with different objective formulations, plus additional techniques like the projection term in S1 and a clause-wise ULP aggregation in later stages. To understand the contribution of each part, we performed an ablation study: running modified versions of StageSAT with one component removed or altered, and measuring the impact on MathSAT-Large results. Table \ref{tab:rq3-ablation} presents the outcome counts for each variant. The full StageSAT (S1–S2–S3 as designed) solved 45 of 49 large benchmarks (24 reported SAT, 21 unsat-guess) with an average solve time of 208 s per instance. We use this as the baseline for comparison.

\begin{table}[htbp]
  \centering
  \small
  \setlength{\tabcolsep}{6pt}
  \caption{Ablation summary on MathSAT-Large benchmarks. Timeout as 20 mins.}
  \label{tab:rq3-ablation}
  \begin{tabular}{l r r r r}
    \hline
    Variant & SAT (\#) & UNSAT (\#) & Timeout (\#) & Avg.\ time (s) \\
    \hline
    Full S1--S2--S3                        & 24 & 21 & 4 & 208.00 \\
    No S1 (start S2 directly)              & 13 & 22 & 14 & 409.07 \\
    No S3 (S1--S2 only)                    & 22 & 23 & 4 & 262.61 \\
    S1 without projection term             & 21 & 24 & 4 & 291.82 \\
    S1 with absolute residuals             & 18 & 27 & 4 & 196.05 \\
    S2/S3 without clause-wise ULP product  & 22 & 21 & 6 & 264.06 \\
    \hline
  \end{tabular}
\end{table}

Removing Stage 1 and starting directly with Stage 2 had a drastic effect on performance. This "No S1" variant solved only 35/49 instances (a drop from 45) and suffered 14 timeouts (versus 4 in full StageSAT). In particular, the number of satisfiable instances found fell sharply (from 24 down to just 13). The average runtime on solved cases also nearly doubled (409 s vs 208 s). This shows that Stage 1 is crucial for scalability – its projection-aided, squared-residual objective quickly guides the solver toward feasible regions. Without S1’s fast coarse guidance, the solver struggled and timed out much more frequently. Intuitively, S1’s lightweight objective (based on magnitudes of residuals) is much cheaper to evaluate than the more precise ULP-based objectives in S2 and S3, so it can explore the search space broadly and find a promising region quickly. Conclusion: Stage 1 is indispensable for reaching high coverage on large problems.

Conceptually, Stage 2 is critical in regimes where purely squared-magnitude objectives get stuck, such as subnormal regions and underflow.
The \textsc{MathSAT-Large} set doesn't contain such stress cases, which is why we do not include a ``No S2'' configuration in this ablation study.
In contrast, the JFS and Grater benchmarks include many formulas with similar structure, and on those suites Stage 2 is essential for steering \textsc{StageSAT} toward correct models.

Next, removing Stage 3 had a more subtle but important impact. The “No S3” variant solved the same total number of instances (45/49), \emph{but} it found fewer SAT solutions (22 vs 24) and correspondingly classified more cases as unsat-guess (23 vs 21). In fact, two benchmarks that full StageSAT successfully solved as SAT were missed by the S1–S2-only solver, which converged to a non-zero minimum and reported them (incorrectly) as unsatisfiable. This indicates that Stage 3 is essential for correctness on edge cases – it performs a discrete, high-precision search over the FP lattice that can “close the gap” when the continuous optimization in S2 stalls. Without S3, StageSAT’s SAT coverage would drop and it would misclassify some satisfiable problems as unsat-guess. Conclusion: Stage 3 is necessary to achieve StageSAT’s 100\% soundness for SAT results, by catching those tricky cases where numerical smoothing alone isn’t enough.

We also tested variations in the \emph{within-stage techniques}. Removing the orthogonal projection term from Stage 1 (while still doing S1–S2–S3) kept the total solved count at 45, but StageSAT’s SAT find rate worsened slightly (21 SAT vs 24) and the average solve time increased to \textasciitilde{}292 s. This suggests the projection term – which imposes a partial monotone descent property – indeed helps the solver find more models (3 extra in the full version) and do so faster. Its absence likely caused the optimizer to sometimes get bogged down on flat or misleading surfaces, turning a few would-be SAT cases into unsat-guesses and generally slowing progress. Replacing S1’s squared-residual objective with a simpler absolute residual objective had a mixed effect: it slightly improved runtime (196 s average) but at a steep cost in SAT solutions (only 18 SAT found, with 27 unsat-guess). This trade-off indicates that while absolute residuals might speed up convergence in some cases, they provide a weaker guidance toward exact solutions, causing StageSAT to miss many satisfiable cases that the squared residual version would solve. Lastly, disabling the clause-wise ULP product in Stages 2–3 (an aggregation scheme that emphasizes satisfying all constraints together) reduced StageSAT’s overall solves to 43/49 and led to 6 timeouts (versus 4 in full StageSAT). It also slightly diminished SAT coverage (22 SAT vs 24 in full StageSAT). This indicates that the clause-wise product, though a simple heuristic to combine per-clause errors, helps StageSAT prioritize assignments that satisfy \emph{every} constraint, thereby avoiding some timeouts and finding a couple more models.

In summary, the ablation study confirms that each stage and each design choice contributes meaningfully to StageSAT’s success. Stage 1’s projection-aided coarse search is vital for scale and speed; Stage 2’s ULP-based continuous optimization is critical for navigating tricky FP regions; Stage 3’s discrete refinement is crucial for final correctness; and the added touches (projection term, squared metrics, ULP aggregation) further improve both performance and result quality. Removing any one of these either hurts StageSAT’s ability to solve tough instances or causes it to miss solutions, validating the full S1–S2–S3 design as necessary to achieve the reported high coverage and accuracy.

\begin{table}[tbp]
  \centering
  \scriptsize
  \setlength{\tabcolsep}{3pt}
  \renewcommand{\arraystretch}{0.9}
  \caption{Summary on MathSAT-Large, MathSAT-Middle, MathSAT-Small, JFS, and Grater benchmarks.\\
\#Unk/UG = \#Unknown/Unsat-Guess. Timeout as 20 mins.}
  \label{tab:summary-large-middle}
  \begin{tabular}{@{}l l r r r r r r r r@{}}
    \toprule
    Dataset & Solver & \#SAT & \#UNSAT & \#Timeout & \#Unk/UG & \#Error & SAT Coverage & Avg.\ time(s) & Med.\ time(s) \\
    \midrule
    \multicolumn{10}{l}{\textbf{MathSAT-Large   (49 benchmarks, 26 SAT, 23 UNSAT)}} \\
    \midrule[0.5pt]
     & \textsc{cvc5}     & 19 & 17 & 13 & 0  & 0 & 73.1\% & 492.82 & 269.56 \\
     & \textsc{Bitwuzla} & 19 & 20 & 10 & 0  & 0 & 73.1\% & 443.89 & 170.45 \\
     & \textsc{Z3}       & 10 &  6 & 27 & 0  & 6 & 38.5\% & 911.09 & >1200 \\
     & \textsc{MathSAT}  & 12 & 14 & 23 & 0  & 0 & 46.2\% & 648.33 & 581.35 \\
     & \textsc{XSat}     & 12 & 23 &  2 & 0  & 0 & 46.2\% & 134.02 & 3.04 \\
     & \textsc{goSAT}    &  8 &  0 &  0 & 41 & 0 & 30.8\% & 8.27 & 0.05 \\
     & \textsc{Grater}   & 16 &  0 & 25 & 0  & 4 & 57.7\% & 819.24 & >1200 \\
     & \textsc{JFS}      &  4 &  0 & 45 & 0  & 0 & 15.4\% & 1102.46 & >1200 \\
     & \textsc{StageSAT} & 24 & 0 &  4 & 21  & 0 & 92.3\% & 208.00 & 14.96 \\
    \midrule[0.5pt]
    \multicolumn{10}{l}{\textbf{MathSAT-Middle  (35 benchmarks, 29 SAT, 6 UNSAT)}} \\
    \midrule[0.5pt]
     & \textsc{cvc5}     & 29 & 6 & 0 & 0 & 0 & 100.0\% & 94.65 & 39.24 \\
     & \textsc{Bitwuzla} & 29 & 6 & 0 & 0 & 0 & 100.0\% & 31.10 & 22.59 \\
     & \textsc{Z3}       & 19 & 5 & 6 & 0 & 5 & 65.5\% & 370.31 & 323.14 \\
     & \textsc{MathSAT}  & 28 & 6 & 1 & 0 & 0 & 96.6\% & 77.19 & 33.47 \\
     & \textsc{XSat}     & 29 & 6 & 0 & 0 & 0 & 100.0\% & 1.06 & 0.14 \\
     & \textsc{goSAT}    & 23 & 0 & 0 & 12& 0 & 79.3\% & 0.26 & 0.02 \\
     & \textsc{Grater}   & 26 & 0 & 9 & 0 & 0 & 89.7\% & 337.13 & 0.18 \\
     & \textsc{JFS}      &  9 & 0 & 26& 0 & 0 & 31.0\% & 892.06 & >1200 \\
     & \textsc{StageSAT} & 29 & 0 & 0 & 6 & 0 & 100.0\% & 8.59 & 0.18 \\
    \midrule[0.5pt]
    \multicolumn{10}{l}{\textbf{MathSAT-Small   (130 benchmarks, 63 SAT, 67 UNSAT)}} \\
    \midrule[0.5pt]
     & \textsc{cvc5}     & 63 & 55 & 12 & 0 & 0 & 100.0\% & 136.44 & 3.34 \\
     & \textsc{Bitwuzla} & 63 & 64 & 3 & 0 & 0 & 100.0\% & 35.64 & 1.93 \\
     & \textsc{Z3}       & 56 & 53 & 13 & 0 & 8 & 88.9\% & 194.18 & 24.79 \\
     & \textsc{MathSAT}  & 63 & 51 & 16 & 0 & 0 & 100.0\% & 168.37 & 7.31 \\
     & \textsc{XSat}     & 62 & 67 & 0 & 0 & 1 & 98.4\% & 0.46 & 0.14 \\
     & \textsc{goSAT}    & 58 & 0 & 0 & 72 & 0 & 92.1\% & 0.07 & 0.02 \\
     & \textsc{Grater}   & 63 & 0 & 38 & 0 & 28 & 100.0\% & 471.50 & 0.81 \\
     & \textsc{JFS}      & 53 & 0 & 77 & 0 & 0 & 84.1\% & 717.96 & >1200 \\
     & \textsc{StageSAT} & 63 & 0 & 0 & 67 & 0 & 100.0\% & 1.64 & 0.66 \\
    \midrule[0.5pt]
    \multicolumn{10}{l}{\textbf{JFS             (111 benchmarks, 111 SAT, 0 UNSAT)}} \\
    \midrule[0.5pt]
     & \textsc{cvc5}     & 104 & 0 & 6 & 0 & 1 & 93.7\% & 110.87 & 4.74 \\
     & \textsc{Bitwuzla} & 108 & 0 & 3 & 0 & 0 & 97.3\% & 103.65 & 3.24 \\
     & \textsc{Z3}       & 85 & 0 & 18 & 0 & 8 & 76.6\% & 327.55 & 4.27 \\
     & \textsc{MathSAT}  & 102 & 0 & 9 & 0 & 0 & 91.9\% & 149.23 & 11.60 \\
     & \textsc{XSat}     & 101 & 4 & 0 & 0 & 6 & 91.0\% & 13.04 & 0.14 \\
     & \textsc{goSAT}    & 88 & 0 & 0 & 19 & 4 & 79.3\% & 2.21 & 0.03 \\
     & \textsc{Grater}   & 107 & 0 & 4 & 0 & 0 & 96.4\% & 71.86 & 0.06 \\
     & \textsc{JFS}      & 69 & 0 & 42 & 0 & 0 & 62.2\% & 458.33 & 0.72 \\
     & \textsc{StageSAT} & 111 & 0 & 0 & 0 & 0 & 100.0\% & 13.10 & 0.05 \\
    \midrule[0.5pt]
    \multicolumn{10}{l}{\textbf{Grater          (118 benchmarks, 118 SAT, 0 UNSAT)}} \\
    \midrule[0.5pt]
     & \textsc{cvc5}     & 118 & 0 & 0 & 0 & 0 & 100.0\% & 37.62 & 2.13 \\
     & \textsc{Bitwuzla} & 117 & 0 & 1 & 0 & 0 & 99.2\% & 64.30 & 10.61 \\
     & \textsc{Z3}       & 103 & 0 & 12 & 0 & 3 & 87.3\% & 241.12 & 64.73 \\
     & \textsc{MathSAT}  & 111 & 0 & 7 & 0 & 0 & 94.1\% & 131.05 & 25.29 \\
     & \textsc{XSat}     & 115 & 2 & 0 & 0 & 1 & 97.5\% & 1.62 & 0.14 \\
     & \textsc{goSAT}    & 92 & 0 & 0 & 26 & 0 & 78.0\% & 0.70 & 0.02 \\
     & \textsc{Grater}   & 118 & 0 & 0 & 0 & 0 & 100.0\% & 4.88 & 0.41 \\
     & \textsc{JFS}      & 40 & 0 & 78 & 0 & 0 & 33.9\% & 793.65 & >1200 \\
     & \textsc{StageSAT} & 118 & 0 & 0 & 0 & 0 & 100.0\% & 10.34 & 0.18 \\
    \bottomrule
  \end{tabular}
\end{table}

\myheading{RQ4: Overall Performance Across All Benchmark Suites}\emph{What is StageSAT’s overall performance when considering all benchmark categories (small, medium, large, and different sources) and how does it compare to other solvers across the board?} To answer this, we aggregated results from all five benchmark sets in Table \ref{tab:summary-large-middle}: MathSAT-Large, MathSAT-Middle, MathSAT-Small, JFS, and Grater. StageSAT’s trends observed in the large suite extend to the others, showing a consistently strong performance. First, StageSAT was able to complete every single instance in the MathSAT-Small, MathSAT-Middle, JFS, and Grater sets within the timeout. In those four suites (totaling over 280 formulas), StageSAT had zero timeouts and solved all satisfiable instances, achieving 100\% coverage. The MathSAT-Large suite remained the only source of timeouts for StageSAT (4 out of 49, as discussed). This means that across all benchmarks tested, StageSAT solved 45 (Large) + (all 35 Middle) + (all 130 Small) + (all 111 JFS) + (all 118 Grater) = 439 problems, missing only the 4 hardest large cases. By contrast, other incomplete solvers often left many problems unsolved (especially on the larger instances), and even complete solvers occasionally timed out on some medium or small benchmarks with complex formulas. StageSAT’s ability to handle \emph{every} instance in the easier suites underscores its robustness and efficiency even on simpler or moderate problems.

Looking at SAT coverage in particular – i.e. the fraction of truly satisfiable benchmarks for which a solver can find a model – StageSAT was the top performer on every suite against the incomplete solvers. It consistently identified the most solutions. For example, in the Grater benchmarks (many of which are satisfiable tricky cases), StageSAT found solutions for all satisfiable instances, whereas XSat, goSAT, or JFS missed some and/or gave up early. In the JFS suite, which was tailored to a fuzzing approach, StageSAT also managed to solve all satisfiable cases, effectively equaling or surpassing JFS’s coverage but with a systematic method. Meanwhile, StageSAT’s unsat-guess heuristic did not lead it astray on the smaller benchmarks either: on all problems where complete solvers reported UNSAT, StageSAT either also reported unsat-guess or, in many cases, was able to quickly decide unsat-guess before the complete solver timed out. There were no instances where StageSAT’s answer disagreed with the known ground truth. This level of agreement gives confidence that StageSAT’s approach scales down well in addition to scaling up.

Comparing StageSAT to complete solvers across all sets, we observe a complementary profile. Complete SMT solvers like Bitwuzla and CVC5 maintain perfect soundness (never guessing UNSAT) and can eventually solve every problem given enough time, but they often require significantly more time on satisfiable cases. StageSAT, on the other hand, excels in speed: across the board it dramatically reduced solve times on satisfiable benchmarks, with 3×–100× lower mean runtimes and one to two orders of magnitude lower median runtimes compared to the complete solvers in each category. Especially on the harder end of MathSAT-Large and on the JFS/Grater sets (which contain complex mathematical constraints), StageSAT’s numeric search finds solutions much faster than exhaustive search, while the complete solvers tend to hit many timeouts or long runs. In terms of coverage, StageSAT is competitive: it solved as many or more instances as the best complete solver on Large, and it solved \emph{every} instance on the other suites, whereas some complete solvers struggled with a handful of those (due to the exponential blow-up of bit-blasting on certain formulas). For example, on the Grater benchmarks, StageSAT had no timeouts, whereas some complete solvers timed out on a few; similarly for the JFS suite. This suggests that for these types of floating-point problems, StageSAT can serve as an effective front-line solver to quickly catch the easy satisfiable cases and many of the hard ones, thereby alleviating the load on complete solvers.

\section{Related Work}

\paragraph{Mainstream SMT Solvers for Floating-Point.}
Modern SMT solvers such as Z3~\cite{DBLP:conf/tacas/MouraB08}, CVC5~\cite{DBLP:conf/tacas/BarbosaBBKLMMMN22}, MathSAT5~\cite{mathsat5, mathsat-fmcad}, and Bitwuzla~\cite{DBLP:conf/cav/NiemetzP23} represent the state-of-the-art in bit-precise reasoning for floating-point constraints. These solvers reduce floating-point formulas to lower-level theories (typically bit-vectors via bit-blasting) and leverage DPLL(T)/CDCL(T)\cite{DBLP:conf/cav/dpll1, DBLP:journals/jacm/dpll2, cdcl1, dpll3} search to ensure completeness. This approach has proven highly effective on many benchmarks – indeed, Bitwuzla and CVC5 have dominated recent SMT-COMP competitions in floating-point divisions\cite{SMTCOMP2025}. However, bit-precise methods can struggle with the enormous search space induced by low-level encodings, especially for complex numerical constraints. In contrast, StageSAT takes an alternative route: rather than exhaustive logical reasoning at the bit-level, it harnesses numerical optimization techniques\cite{weak-distance, numerical, numerical2} to navigate the search space. This strategy is fundamentally different and complementary to bit-blasting approaches. StageSAT avoids enumerating bit patterns explicitly, potentially scaling to constraints that are intractable for traditional solvers, while still ultimately producing bit-precise satisfying assignments. Unlike Bitwuzla’s recent incorporation of bit-level local search heuristics\cite{DBLP:conf/aaai/FrohlichBWH15}, StageSAT operates at the numeric level, using a staged optimization process to guide the solver toward a solution. This novel perspective allows StageSAT to tackle floating-point constraints from a fresh angle, without directly competing with the intricate but heavyweight bit-precise procedures of Z3, CVC5, MathSAT, and Bitwuzla.

\paragraph{Heuristic and Fuzzing-Based Solvers.}
An orthogonal line of work has explored heuristic methods for SMT solving. Notably, JFS (the "Just Fuzz It" Solver)~\cite{jfs} applies coverage-guided fuzzing~\cite{afl, aflfast} to floating-point satisfiability, randomly mutating inputs using feedback from solver executions. This approach can quickly find solutions for certain FP problems that confound traditional solvers. However, JFS is neither complete nor optimization-based: it cannot prove unsatisfiability and lacks a principled objective beyond code coverage heuristics. StageSAT is markedly different: we define a clear mathematical objective whose minimization corresponds to satisfying the formula. While JFS demonstrates the value of unconventional search strategies, StageSAT provides a more systematic approach grounded in optimization. Unlike JFS's random exploration, StageSAT leverages its optimization framework to navigate toward models with principled guidance rather than guessing.

\paragraph{Optimization-Based Floating-Point Solvers.}
Our work is most directly inspired by prior attempts to solve floating-point constraints via mathematical optimization. XSat~\cite{DBLP:conf/cav/FuS16} pioneered this direction by translating constraints into a real-valued optimization problem checked against IEEE-754 semantics~\cite{ieee754}. This yielded a fast solver but could miss corner cases where real and floating-point solutions diverge. GoSAT~\cite{DBLP:conf/fmcad/KhadraSK17} formulated floating-point satisfiability as a global optimization problem, efficiently finding models for difficult FP problems but remaining incomplete and vulnerable to discontinuous regions. More recently, Grater~\cite{grater} refined this paradigm using carefully constructed continuous objectives and gradient-based methods, achieving impressive performance matching or surpassing Bitwuzla and CVC5 on several benchmarks. parSAT~\cite{parsat} extends this work by parallelizing multiple stochastic optimization methods~\cite{basinhopping, crs2, isres} on multi-core CPUs. Despite these advances, all prior optimization-based solvers operate in a single phase: they reduce the entire formula to one monolithic objective and apply an off-the-shelf optimizer. This one-stage approach leaves them vulnerable to local minima and irregular constraint landscapes with piecewise-defined semantics or abrupt discontinuities.

StageSAT distinguishes itself by its \emph{staged optimization} strategy, which to our knowledge has not been explored in prior SMT solvers. Instead of a single optimization run, StageSAT breaks the solving process into multiple stages, each solving a sub-problem that incrementally moves closer to satisfiability. This enables StageSAT to surmount hurdles that tripped up earlier tools: it first solves a relaxed version of the constraints (smoothing out non-linear or discontinuous behaviors), then progressively re-introduces full floating-point semantics. This avoids local minima that one-shot methods fall into and handles discontinuities by isolating them in specific stages. None of XSat, goSAT, or Grater employs such a multi-phase scheme. This multi-stage design balances the smoothness of numeric optimization with bit-level precision, resulting in a solver that is more robust on certain hard benchmarks than prior techniques.

\section{Conclusion}

We introduced StageSAT, a solver for satisfiability in the SMT logic QF\_FP, which addresses the challenges of floating-point constraints with a novel three-stage architecture. StageSAT’s pipeline combines (i) fast projection-aided descent, (ii) an ULP$^2$-shaped objective optimization, and (iii) an n-ULP lattice refinement. By integrating numeric search with a final bit-level refinement, StageSAT efficiently navigates the floating-point search space while ensuring solution correctness. This design is both accurate and practically effective, overcoming key limitations of prior numeric solvers that often suffer from imprecise search or incomplete results. In particular, StageSAT’s lattice-refinement phase guarantees high precision, making it novel compared to earlier numeric SMT solvers that could not always ensure correct or complete results. Overall, StageSAT’s multi-phase approach represents a principled advancement in floating-point SMT solving, combining the strengths of continuous optimization and discrete search in a way not seen in previous tools.

Our experimental results validate StageSAT's approach: it matched or exceeded the SAT recall of all other solvers in our benchmarks. It also solved more large, challenging benchmarks than any competing solver, attaining the highest coverage on the most complex problem sets. At the same time, StageSAT exhibited the lowest timeout rate among all tools evaluated, indicating its superior reliability on hard constraints. For example, on the largest benchmark category, StageSAT solved the most instances with the fewest timeouts, whereas the best alternative left several satisfiable problems unsolved. These concrete results demonstrate that StageSAT delivers robust and accurate performance in practice, substantially outperforming both prior numeric approaches and state-of-the-art bit-precise solvers on difficult floating-point problems. In conclusion, the StageSAT solver is novel, accurate, and effective – it advances the state of the art in QF\_FP satisfiability by solving complex floating-point constraints with high precision and practical efficiency in our evaluation.

\clearpage

\bibliographystyle{ACM-Reference-Format}

\bibliography{references}

\clearpage
\appendix
\section{StageSat Implementation Pseudocode}
\label{sec:pseudocode}

\begin{algorithm}
\caption{Solve Objective Function with Global Minimization}
\begin{algorithmic}[1]
\State \textbf{Input:}
\State \quad $C$: Quantifier-free floating-point constraints of FP
\State \quad $R_{sq}$: Fast projection-aided objective function
\State \quad $R_{ulp}$: ULP distance objective function
\State \textbf{Output:}
\State \quad \textsc{sat} with a model of $C$, if found, or \textsc{unsat-guess} otherwise
\State
\Procedure{Global\_Minimizer}{$C, R_{sq}, R_{ulp}$}
    \For{$i = 1$ \textbf{to} $nStartOver$}
        \State $\triangleright$ Round 1: Fast convergence using $R_{sq}$
        \State $x_1^{(0)} \gets$ \Call{Random\_Starting\_Point()}{}
        \State $x_1^{(L)} \gets$ \Call{Local\_Minimize}{$R_{sq}, x_1^{(0)}$}
        \State $\triangleright$ Round 2: Refine using precise $R_{ulp}$
        \State $x_2^{(L)} \gets$ \Call{Local\_Minimize}{$R_{ulp}, x_1^{(L)}$}
        \If{\Call{IsModel}{$C, x_2^{(L)}$}}
            \Comment{Use Z3 library function for model validation}
            \State \Return \textsc{sat}, $x_2^{(L)}$
        \EndIf
        \State $\triangleright$ Round 3: Fine-grained search in FP neighborhood
        \State $N^{(0)} \gets$ \Call{zeros}{dim} \Comment{Start from zero offsets}
        \State Define $f(N) = R_{ulp}([\Call{Nth\_Floating\_Point}{N[j], x_2^{(L)}[j]} \text{ for } j = 1 \text{ to } dim])$
        \State $N^{(L)} \gets$ \Call{Local\_Minimize}{$f, N^{(0)}$}
        \State $x_3^{(L)} \gets [\Call{Nth\_Floating\_Point}{N^{(L)}[j], x_2^{(L)}[j]} \text{ for } j = 1 \text{ to } dim]$
        \If{\Call{IsModel}{$C, x_3^{(L)}$}}
            \State \Return \textsc{sat}, $x_3^{(L)}$
        \EndIf
    \EndFor
    \State \Return ``\textsc{unsat-guess}''
\EndProcedure
\State
\Function{Local\_Minimize}{$R, x^{(0)}$}
    \State $x^{(L)} \gets$ scipy.optimize.basinhopping($R, x^{(0)}$)
    \State \Return $x^{(L)}$
\EndFunction
\State
\Function{Nth\_Floating\_Point}{$n, x$}
    \State $\triangleright$ Time complexity: $O(1)$
    \State \Return the $n$-th representable floating-point number from $x$
\EndFunction
\end{algorithmic}
\end{algorithm}

\begin{algorithm}
\caption{Generate Objective Function}
\begin{algorithmic}[1]
\State \textbf{Input:} $C$ (quantifier-free FP constraints)
\State \textbf{Output:} $R_{sq}$, $R_{ulp}$
\State
\Procedure{Generate\_R\_Square}{$C$}
    \State $C_{leq}, C_{other} \gets \Call{Parse\_C}{C}$
    \If{$C_{leq} = \emptyset$}
        \State $R_{sq}(x) \gets \sum C_{other}$
    \Else
        \State $A,b \gets \Call{Build\_Matrix}{C_{leq}}$
        \State $G^{+} \gets \Call{MoorePenrose}{A A^{\top}}$
        \State $P \gets A^{\top} G^{+}$;\quad $M \gets I - P A$;\quad $c \gets P b$
        \State $R_{sq}(x) \gets \sum_{i=1}^{n} (x_i - (M x + c)_i)^2 + \sum C_{other}$
    \EndIf
    \State \Return $R_{sq}$
\EndProcedure
\State
\Procedure{Generate\_R\_ULP}{$C$}
    \State $C = \{c_i : \text{lhs}_i = \text{rhs}_i \mid i = 1,\ldots,m\}$
    \State $R_{ulp}(x) \gets \sum_{i=1}^{m} \text{ulp}(\text{lhs}_i, \text{rhs}_i)$
    \State \Return $R_{ulp}$
\EndProcedure
\end{algorithmic}
\end{algorithm}

\end{document}